\documentclass[
 reprint,
 superscriptaddress,
 amsmath,amssymb,
 aps,
 floatfix
]{revtex4-2}

\usepackage{graphicx, amsfonts, amsmath, amssymb, physics, bm, amsthm}
\usepackage{xcolor}
\usepackage[hidelinks]{hyperref}
\usepackage{cleveref}
\usepackage{quantikz}
\usepackage{stmaryrd}
\usepackage{amsmath}
\usepackage{subfigure}

\usepackage{thm-restate}

\newtheorem{theorem}{Theorem}
\newtheorem{definition}{Definition}
\newtheorem{conjecture}{Conjecture}
\newtheorem{lemma}{Lemma}
\newtheorem{corollary}{Corollary}

\newcommand{\poly}{\operatorname{poly}}

\newcommand{\aqa}{$\langle aQa ^L\rangle $ Applied Quantum Algorithms, Universiteit Leiden, Leiden, Netherlands}

\newcommand{\liacs}{LIACS, Universiteit Leiden, Leiden, Netherlands}
\newcommand{\cern}{European Organization for Nuclear Research (CERN), Geneva 1211, Switzerland}

\begin{document}

\title{Quantum Advantage in Learning Quantum Dynamics via Fourier coefficient extraction}

\author{Alice Barthe}
\affiliation{\cern}
\affiliation{\aqa}
\affiliation{\liacs}

\author{Mahtab Yaghubi Rad}
\affiliation{\aqa}
\affiliation{\liacs}

\author{Michele Grossi}
\affiliation{\cern}

\author{Vedran Dunjko}
\affiliation{\aqa}
\affiliation{\liacs}

\date{\today}

\newcommand{\ab}[1]{{\color{blue} AB: #1}}
\newcommand{\id}{\mathbbm{1}}
\newcommand{\cca}{PQC-based functions}
\newcommand{\ccb}{Hamiltonian dynamics}
\newcommand{\subroutine}{Fourier coefficient extraction}

\begin{abstract}
One of the key challenges in quantum machine learning is finding relevant machine learning tasks with a provable quantum advantage. 
A natural candidate for this is learning unknown Hamiltonian dynamics. 
Here, we tackle the supervised learning version of this problem, where we are given random examples of the inputs to the dynamics as classical data, paired with the expectation values of some observable after the time evolution, as corresponding output labels. The task is to replicate the corresponding input-output function.
We prove that this task can yield provable exponential classical-quantum learning advantages under common complexity assumptions in natural settings.
To design our quantum learning algorithms, we introduce a new method, which we term \textit{\subroutine}~algorithm for parametrized circuit functions, and which may be of independent interest.
Furthermore, we discuss the limitations of generalizing our method to arbitrary quantum dynamics while maintaining provable guarantees.  
We explain that significant generalizations are impossible under certain complexity-theoretic assumptions, but nonetheless, we provide a heuristic kernel method, where we trade-off provable correctness for broader applicability.
\end{abstract}

\maketitle

\section{Introduction}
Identifying when quantum computers provide an advantage in learning tasks is a central challenge of quantum machine learning. 
Problems involving quantum systems are natural candidates, but many of their associated learning tasks can still be solved efficiently by classical computers. 
For instance, \cite{huang_provably_2022} shows that predicting ground state properties within a phase is classically tractable with limited training data.
Nonetheless, the prevailing intuition is that access to a device that can efficiently simulate the target quantum system should be advantageous. 
While earlier works identified contrived problems exhibiting such separation \cite{gyurik_exponential_2023,molteni_exponential_2024}, identifying quantum-classical learning separations in natural settings remains a compelling open question.

We address this challenge by considering the supervised learning problem of \emph{learning unknown Hamiltonian dynamics} from classical data. 
Specifically, we define a family of functions $f_\alpha(x) = \langle\psi(x,\alpha)|O|\psi(x,\alpha)\rangle$, where $|\psi(x,\alpha)\rangle = e^{iH(x,\alpha)t}|\psi_0\rangle$ is a time-evolved quantum state under a parametrized Hamiltonian $H(x,\alpha)$ for some fixed time $t$.
The task is to learn $f_\alpha$ (in the PAC sense, which we explain shortly) from a dataset $\mathcal{D} = \{(x_j, f_{\alpha}(x_j))\}_j$ for some fixed, unknown $\alpha$.
Our settings assume a more restricted (and maybe more realistic) access to the dynamics than found in related works in the literature \cite{caro_out--distribution_2023,wiebe_hamiltonian_2014}, in particular excluding the use of Hamiltonian learning methods.

Our contributions can be summarized as follows. 
We introduce a quantum subroutine to extract the Fourier decomposition of PQC-based functions, which we call \textit{\subroutine}~subroutine.
We then use this subroutine to define a quantum learning algorithm to solve the unknown Hamiltonian dynamics problem, which we formalize as the \ccb~concept class (defined shortly) in the probably approximately correct (PAC) framework.
In our approach, we use the fact that the quantum dynamics class can be approximated via Hamiltonian simulation by a class of functions built around parametrized quantum circuits, which we call the \cca~concept class. The latter class will serve as the effective hypothesis family of our learning algorithm.
Our learning algorithm is then proven to be correct for both concept classes. 
The algorithm is efficient when the number of unknown parameters (of the PQC or Hamiltonian) scales logarithmically with the system size. 
We also prove that no classical algorithm can solve this learning task under complexity-theoretic assumptions, yielding a separation for this natural problem.
We also consider the much more general setting of polynomially many unknown parameters. 
For this case, we analyze the potential and limitations of generalization of our method for this broader class of quantum dynamics problems. We identify conditional no-gos for provably efficient learners, but also propose a heuristic method for the problem, which may work in cases beyond what can be proven analytically.

This paper is structured as follows. \Cref{sec:background} covers the required background in PAC learning and PQCs. \Cref{sec:freqsamp} introduces the \textit{\subroutine} algorithm. \Cref{sec:learningpqc} applies this to learn PQCs, and \Cref{sec:learningdyn} extends this to Hamiltonian evolution. \Cref{sec:discussion} discusses limitations and heuristic extensions. We finish the paper with a brief conclusion section in \Cref{sec:conclusion}.

\section{Background}
\label{sec:background}
\subsection{Probably Approximately Correct learning}
PAC learning theory provides a formal mathematical framework for learning tasks. PAC analyzes so-called concept classes $\mathcal{C}$, which are sets of functions called concepts $c_\alpha$:
$$\mathcal{C} = \{c_\alpha : x \in \mathcal{X} \rightarrow y = c_{\alpha}(x) \in \mathcal{Y}\}_{\alpha \in \mathcal{A}}\,.$$
The key goal in PAC learning is to approximate an unknown concept from a specified class given access to a dataset of examples labeled by the concept. 
That is, for a fixed unknown concept $c_{\alpha_0}$ in the concept class, we are given $T$ input-output pairs. 
The inputs $x_t$ are sampled from a probability distribution $D$ over the input domain $\mathcal{X}$, and the outputs are $c_{\alpha_0}(x_t)$. 
We call this dataset $\mathcal{S}$.
\begin{equation}
    \mathcal{S} \coloneqq \{(x_t,y_t = c_{\alpha_0}(x_t))\}_{t\in[1,T]},~ x_t \overset{\mathrm{i.i.d.}}{\sim} D \,.
\end{equation}
Using such a dataset, the goal of a learning algorithm is to output a \textit{hypothesis} function $h:\mathcal{X}\rightarrow \mathcal{Y}$ which one can apply to unseen data $x_{new}$. In successful learning, the assigned labels $\hat{y}_{\text{new}} = h(x_{\text{new}})$ will be $\epsilon$-close to the true label with high probability $1-\delta$. 
A tuple $(\mathcal{C},D)$ is PAC learnable if there exists an algorithm which produces a hypothesis $h$ given only polynomially many examples, that is $T$ scales at most polynomially with $\delta^{-1}$, $\epsilon^{-1}$ and the  size of the input.  
Adding to PAC learnability a notion of computational costs, a concept class $\mathcal{C}=\{\mathcal{C}_n\}_{n}$\footnote{Here the concept class is divided into sub-classes specific to input size $n$ to make the dependence more explicit.} increasing in size $n$, which yields the concept of \textbf{efficient} PAC learning, defined as follows.
\begin{definition}[Efficient PAC learnability]
\label{def:pac}
The concept class $\mathcal{C}=\{\mathcal{C}_n\}_n$ is \textit{efficiently PAC learnable} if there exists a $\poly(\epsilon^{-1},\delta^{-1},n)$-time algorithm $\mathcal{A}$ 
such that for all $\epsilon\geq 0 $, all $0\leq\delta\leq 1$, all $n$, 
for any target concept $c:\mathcal{X}_n\rightarrow\mathcal{Y}_n$ in $\mathcal{C}_n$ and any target distribution $\mathcal{D}_n$ on $\mathcal{X}_n$, if $\mathcal{A}$ receives in input a training dataset $\{(x_t,c(x_t))\}_{t\in[0,T]}$ of $\poly(\epsilon^{-1},\delta^{-1},n)$-size $T$, then with probability at least $1-\delta$ over the random datasets, the learning algorithm $\mathcal{A}$ outputs a specification of a hypothesis function $h=\mathcal{A}(T,\epsilon,\delta)$ running in $\poly(\epsilon^{-1},\delta^{-1},n)$-time in satisfies
\begin{equation}\label{eq:real pac}
    \operatorname{Pr}_{x \sim \mathcal{D}} (\lvert h(x)- c(x)\rvert \leq \epsilon) \geq 1-\delta.
\end{equation}
\end{definition}

Note that this definition of PAC-learnability is distribution-independent; it should be true for any input distribution.
However, sometimes PAC learnability is distribution dependent, in which case we say that $\{(\mathcal{C},\mathcal{D})\}$ is PAC learnable.

 We say that if the above is true for \textit{classical} polynomial time, then the concept class is \textit{classically efficiently learnable}, and if either $\mathcal{A}$ or $h$ or both run in \textit{quantum} polynomial time, then the concept class is \textit{quantum efficiently learnable}.
 
 In the following sections, we will be proving quantum learnability. 
We will also prove classical non-learnability, that is, the existence of a classical learning algorithm implying unlikely complexity class inclusions.

\subsection{Parametrized Quantum circuits}
Parametrized quantum circuits, and in particular variational methods \cite{cerezo_variational_2021}, have been at the centre of recent approaches to quantum machine learning and were extensively studied \cite{schuld_introduction_2015}. 
Here, we focus on parametrized quantum circuits where the inputs and other tunable parameters appear repeatedly as Pauli rotations. 
\begin{definition}[Pauli encoding]
\label{def:paulienc}
    A Pauli-encoded circuit is a parametrized quantum circuit on $n$ qubits $U: \alpha \in [0,1]^{d} \rightarrow \mathcal{U}(2^n)$. It is composed of $N_f \in \poly(n)$ fixed unitary gates and $L\in \poly(n)$ parametrized gates $\{V_l(\alpha) \coloneqq e^{i\pi P_l\alpha_{i_l}}\}_{1\leq l \leq L}$ where $P_l$ are Pauli strings.
\end{definition}
It has been shown that such Pauli Quantum Circuits admit a finite Fourier representation \cite{schuld_effect_2021}.
\begin{lemma}
\label{lem:fourier}
    Any circuit $U$ as defined in \Cref{def:paulienc} admits a finite Fourier representation as follow:
    \begin{equation}
        \ket{\phi(\alpha)} = U(\alpha) \ket{0} = \sum_{1 \leq k \leq 2^n } \sum_{ l \in [-L,+L]^d} a_{l,k} e^{i\pi \alpha\cdot l} \ket{k} \,.
    \end{equation}
\end{lemma}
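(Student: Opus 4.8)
The plan is to expand $U(\alpha)\ket{0}$ gate by gate, using the fact that each parametrized gate is, up to a fixed change of basis, diagonal with only two distinct eigenvalues $e^{\pm i\pi\alpha_{i_l}}$, so that all $\alpha$-dependence can be pushed out into scalar phases.

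First I would recall the standard spectral fact: since each $P_l$ is a Pauli string, $P_l^2 = \id$, so $P_l$ is Hermitian with eigenvalues in $\{+1,-1\}$. Hence there is a fixed unitary $W_l$ (independent of $\alpha$) with $P_l = W_l Z_l W_l^\dagger$ for a diagonal $\pm1$ matrix $Z_l$, and therefore $V_l(\alpha) = e^{i\pi P_l \alpha_{i_l}} = W_l \, \mathrm{diag}\!\big(e^{i\pi s \alpha_{i_l}}\big)_{s=\pm 1}\, W_l^\dagger$. Writing the circuit as a product of fixed unitaries interleaved with the parametrized gates $V_l(\alpha)$, I would insert the resolution of the identity in the eigenbasis of $P_l$ at each parametrized gate. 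This rewrites $U(\alpha)\ket{0}$ as a sum over ``sign paths'' $\vec s = (s_1,\dots,s_L) \in \{-1,+1\}^L$, in which the entire $\alpha$-dependence of the path $\vec s$ is the scalar prefactor $\prod_{l=1}^L e^{i\pi s_l \alpha_{i_l}}$, while everything else — products of fixed unitaries, of the $W_l$, and of rank-one eigenprojectors — collapses into an $\alpha$-independent (generally unnormalized) vector $\ket{c_{\vec s}}$.

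Next I would collect the exponentials: $\prod_{l=1}^L e^{i\pi s_l \alpha_{i_l}} = e^{i\pi\, \alpha\cdot \omega(\vec s)}$, where $\omega(\vec s)\in\mathbb{Z}^d$ has $j$-th coordinate $\sum_{l:\, i_l = j} s_l$. Since each coordinate is a sum of at most $L$ terms equal to $\pm 1$, we have $\omega(\vec s)\in[-L,L]^d$ for every $\vec s$. Grouping the sign paths according to the value $l = \omega(\vec s)$ of their frequency vector, and expanding the resulting fixed vectors in the computational basis, yields $U(\alpha)\ket{0} = \sum_{k}\sum_{l\in[-L,L]^d} a_{l,k}\, e^{i\pi\alpha\cdot l}\ket{k}$ with $a_{l,k} = \sum_{\vec s:\,\omega(\vec s)=l} \braket{k}{c_{\vec s}}$, which is exactly the claimed finite Fourier representation.

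There is no serious obstacle here; the content is essentially bookkeeping, and the statement already appears in \cite{schuld_effect_2021}. The only points requiring care are (i) verifying that the change-of-basis unitaries $W_l$ and the interleaved fixed gates contribute no $\alpha$-dependence, so that all of it is isolated in the scalar phases; and (ii) tracking the map $\vec s \mapsto \omega(\vec s)$ carefully — in particular, when a single parameter $\alpha_j$ feeds several gates, the corresponding signs add, and the coordinate still stays within $[-L,L]$, so the frequency set is the finite box $[-L,L]^d$ as claimed.
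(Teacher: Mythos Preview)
Your argument is correct and is essentially the standard derivation: diagonalize each Pauli rotation, expand over the $2^L$ sign paths, and regroup by the resulting integer frequency vector $\omega(\vec s)\in[-L,L]^d$. The paper itself does not give a standalone proof of this lemma; it simply cites \cite{schuld_effect_2021}. The closest thing to an in-paper argument is the inductive correctness proof of the Fourier-extraction algorithm in Appendix~\ref{app:fourqru}, which establishes the same representation gate by gate (fixed gates act linearly on existing Fourier components; each encoding gate shifts the relevant frequency coordinate by $\pm 1$ according to the parity of the affected qubits). That inductive viewpoint and your global sign-path expansion are two presentations of the same bookkeeping, so there is no substantive difference.
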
 
Measuring the expectation value of some observable $O$ for such a state results in what we call a \textit{PQC function}, an input-output mapping as follows,
\begin{equation}
    f(\alpha) = \bra{0}U^{\dagger}(\alpha)OU(\alpha)\ket{0}\,.
\end{equation}
In the case of quantum reuploading models with Pauli encodings as in \Cref{def:paulienc}, we have that
\begin{equation}
    f(\alpha) = \sum_{l\in[-2L,2L]^d} b_l e^{i\pi\alpha \cdot l} \label{eq:1}\,.
\end{equation}
We call the coefficients $b_l$ the \textit{Fourier coefficients} of the \textit{PQC function} $f$.

In the next section, we provide a subroutine for sampling from them.
Building on this subroutine, we provide a sampling-based algorithm for their estimation.

\section{\subroutine~algorithm}
\label{sec:freqsamp}
\subsection{Fourier representation of parametrized circuits}
For our quantum learning algorithm, we introduce a new subroutine that allows us to prepare a state that amplitude-encodes the coefficients $b_l$ of a PQC function $f$, and describe a sampling-based method for their extraction.
As we explain in \Cref{app:oracles}, it is possible to estimate the coefficients given just appropriate black-box access to the classical function $f$.
However, here we assume access to the gate-decomposition of the circuit, which yields a more elegant and significantly more gate-frugal method.
Based on this gate decomposition, we propose an algorithm that transforms the description of this circuit to the description of a circuit with an additional register for frequencies.
\begin{theorem}
    There exists an algorithm $\mathcal{A}$ that given the description of any parametrized circuit $U$ as defined in \Cref{def:paulienc}, returns the description of
    a non-parametrized quantum circuit $U'=\mathcal{A}(U)$ on $n'$ qubits with $L'$ gates, such that it prepares the Fourier representation state of $U$ as follows:
    \begin{equation}
        \ket{\phi'} = \mathcal{A}(U) \ket{0} = \sum_{1 \leq k \leq 2^n } \sum_{ l \in [-L,+L]^d} a_{l,k} \ket{l} \ket{k} \,.
    \end{equation}
    with $n' = n+d\lceil{\log{L}\rceil}+1$ qubits and $L' = N_f + L(2n+d\lceil{\log{L}\rceil})$ gates.
    \label{thm:fourqru}
\end{theorem}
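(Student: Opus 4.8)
The plan is to build $U'$ gate-by-gate by processing the gates of $U$ in order, maintaining the invariant that at every stage the state on the ancilla "frequency" register plus the original register equals the partial Fourier-decomposed state. We start from $\ket{0}_f \ket{0}_{\mathrm{sys}}$ on $n' = n + d\lceil \log L\rceil + 1$ qubits, where the frequency register is split into $d$ blocks of $\lceil\log L\rceil$ qubits each (one per input parameter $\alpha_i$), plus one extra scratch qubit. Fixed gates $V$ of $U$ are simply copied over, acting only on the system register. The work is in the parametrized gates $V_l(\alpha) = e^{i\pi P_l \alpha_{i_l}}$.

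\medskip
The key step is simulating one parametrized gate. Writing $e^{i\pi P\alpha} = \cos(\pi\alpha)\id + i\sin(\pi\alpha)P = \tfrac12 e^{i\pi\alpha}(\id + P) + \tfrac12 e^{-i\pi\alpha}(\id - P)$, we see that a single Pauli rotation splits the state into a "$+1$ frequency" branch and a "$-1$ frequency" branch, each weighted by the spectral projector $\Pi_\pm = \tfrac12(\id \pm P)$ of $P$. To realize this without the parameter $\alpha$ being physically present, I would (i) use the scratch qubit to coherently compute the eigenvalue of $P_l$ on the current system state — a standard phase-estimation-style step: conjugate $P_l$ into a $Z$ on one system qubit by a Clifford of $O(n)$ gates, CNOT that qubit onto the scratch qubit, undo the Clifford — leaving the scratch qubit holding the bit $s\in\{0,1\}$ with eigenvalue $(-1)^s$; (ii) conditioned on $s$, increment or decrement the $i_l$-th block of the frequency register by one (a controlled $\pm 1$ modular adder on $\lceil\log L\rceil$ qubits, cost $O(\log L)$); (iii) uncompute the scratch qubit by reversing step (i). Tracking the amplitude bookkeeping, the $\tfrac12(\id\pm P)$ projector weights are exactly the $\cos/\sin$-free coefficients that Lemma~\ref{lem:fourier} attributes to $a_{l,k}$, and the frequency register now records the running sum $l$. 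Since each $\alpha_{i_l}$ appears, the exponents accumulate additively across the $\le L$ rotations that touch coordinate $i_l$, so after all $L$ gates the frequency block for coordinate $i$ holds a value in $[-L, +L]$, matching the claimed range; the cost per parametrized gate is $O(n)$ for the two Clifford conjugations and CNOT plus $O(\log L)$ for the adder, i.e.\ $2n + d\lceil\log L\rceil$ counted generously, giving $L' = N_f + L(2n + d\lceil\log L\rceil)$ total.

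\medskip
The final step is to observe that the $e^{i\pi\alpha\cdot l}$ phases, which in the original state multiply $a_{l,k}\ket{k}$, are precisely the phases we have stripped off: in $U'$ we never apply them, so the output amplitude of $\ket{l}\ket{k}$ is exactly $a_{l,k}$, which is what the theorem asserts. One should check that distinct rotation sequences producing the same total frequency $l$ add coherently into the same basis state $\ket{l}$ — this is automatic because the adder is deterministic and reversible, so branches with equal accumulated frequency necessarily land in the same register value, reproducing the sum over paths implicit in Lemma~\ref{lem:fourier}.

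\medskip
The main obstacle I anticipate is the amplitude/sign bookkeeping in step (ii)–(iii): one must verify that the controlled increment does not introduce spurious relative phases (e.g.\ from the adder's internal carries or from the order of the $\Pi_+$ vs.\ $\Pi_-$ branches), and that uncomputing the scratch qubit is exact rather than approximate, so that no garbage entanglement is left to decohere the frequency register. A careful but routine check that the composition $\prod_l (\text{adder}_l \circ \text{eigenphase}_l)$ telescopes correctly — together with confirming the qubit count $n' = n + d\lceil\log L\rceil + 1$ suffices (the "+1" scratch qubit is reused across all gates) — completes the argument.
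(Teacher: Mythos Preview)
Your proposal is correct and follows essentially the same construction as the paper: frequency registers of $d\lceil\log L\rceil$ qubits plus one reusable ancilla, fixed gates passed through unchanged, and each Pauli rotation replaced by (i) computing the Pauli eigenvalue onto the ancilla (the paper phrases this as a local basis change to a $Z$-string followed by parity CNOTs, which is equivalent to your Clifford-conjugate-then-CNOT), (ii) a controlled increment/decrement on the relevant frequency block, and (iii) uncomputing the ancilla. The paper's correctness argument is the same induction on the gate sequence that you sketch, and the gate/qubit counts agree.
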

The detailed algorithm and the proof for this theorem are provided in the \Cref{app:fourqru}. 
We provide a high-level explanation of this algorithm.
The fixed gates (the ones without data uploading) are left unchanged by the algorithm $\mathcal{A}$. 
For the reuploading gates, we note that any data Pauli-uploading gates can be transformed into a $Z$ string by adding appropriate basis change gates.
The algorithm $\mathcal{A}$ transforms a data-encoding gate with a $Z$ Pauli on the bitstring $x$ into an increment (decrement) gate on the frequency register controlled on the even (odd) parity of $x$. 
This is illustrated in \Cref{fig:freqsamp}.

\subsection{Fourier representation of expectation values}
As seen in \Cref{lem:fourier}, when the inputs $\alpha$ are Pauli-encoded, the PQC function has a finite Fourier decomposition. 
We can connect \Cref{thm:fourqru} with this representation to obtain a sampling algorithm for the coefficients $b_l$ as given in \Cref{eq:1}.
First, we construct a circuit that returns the Fourier decomposition of such a quantum function amplitude encoded on a quantum state.

The key to do so is to realise that as $f(\alpha) = \bra{0} U(\alpha)^\dagger PU(\alpha)\ket{0}$, applying the algorithm $\mathcal{A}$ to the state $U(\alpha)^\dagger PU(\alpha)\ket{0}$ and then post-selecting the all-zero state yields the Fourier decomposition of the quantum function $f$ as a quantum state. The full proof is in the \Cref{app:fourexp}.

\begin{corollary}
\label{cor:statefourfun}
    For every circuit $U$ as defined in \Cref{def:paulienc} and Pauli observable $P$, we define the quantum function $f$:
    \begin{equation}
       \alpha\in\mathbb{R}^d \stackrel{f}{\longrightarrow} \bra{0} U(\alpha)^{\dagger} P U(\alpha) \ket{0}\,.
    \end{equation}
    $f$ has a finite Fourier representation, there exists a vector $b \in \mathbb{C}^{d(4L+1)}$ indexed by  $l\in\mathcal{L}\coloneqq[-2L,+2L]^d$ such that
    \begin{equation}
        f(\alpha) = \sum_{ l \in \mathcal{L}} b_l e^{i\pi \alpha \cdot l}\,.
    \end{equation}
    Then there is a quantum algorithm $\mathcal{A}$ with complexity $O(\poly(n,d))$ that retrieves the state amplitude encoding of the Fourier coefficients of $f$ on the $\ket{0}$ subspace.
    \begin{equation}
        \mathcal{A}(U)\ket{0}\ket{0} = \frac{1}{\lVert b \rVert_2} \sum_{ l \in \mathcal{L}} b_l \ket{l}\ket{0}+\ket{\cdots}\ket{0^\perp}\,.
    \end{equation}
\end{corollary}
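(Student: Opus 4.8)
The plan is to reduce the expectation-value function $f(\alpha)$ to a single amplitude of a state produced by a Pauli-encoded circuit of the type handled by \Cref{thm:fourqru}, and then read off the Fourier coefficients by applying $\mathcal{A}$ to that larger circuit. First I would observe that
\begin{equation}
    f(\alpha) = \bra{0} U(\alpha)^{\dagger} P U(\alpha) \ket{0} = \bra{0} W(\alpha) \ket{0},
\end{equation}
where $W(\alpha) \coloneqq U(\alpha)^{\dagger} P U(\alpha)$. The crucial point is that $W(\alpha)$ is itself a Pauli-encoded circuit in the sense of \Cref{def:paulienc}: the fixed part $P$ is a single fixed Pauli gate, and $U(\alpha)^\dagger$ contributes the same parametrized gates as $U(\alpha)$ but in reverse order and with $\alpha_{i_l}\mapsto -\alpha_{i_l}$, which is still of the form $e^{i\pi P_l(-\alpha_{i_l})}$ and hence admissible. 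So $W$ has $2L$ parametrized gates, still indexed by the same $d$ input components, and $N_f' = 2N_f+1$ fixed gates. By \Cref{lem:fourier} applied to $W$, the state $W(\alpha)\ket{0} = \sum_{k}\sum_{l\in[-2L,2L]^d} a_{l,k} e^{i\pi\alpha\cdot l}\ket{k}$, and therefore $f(\alpha) = \sum_{l\in[-2L,2L]^d} a_{l,0}\, e^{i\pi\alpha\cdot l}$, i.e.\ $b_l = a_{l,0}$: the Fourier coefficients of $f$ are exactly the $\ket{k=0}$-block Fourier amplitudes of $W$.

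Next I would apply \Cref{thm:fourqru} to the circuit $W$ (not $U$): this yields a non-parametrized circuit $\mathcal{A}(W)$ on $n' = n + d\lceil\log(2L)\rceil + 1$ qubits with $L' \in O(\poly(n,d))$ gates such that
\begin{equation}
    \mathcal{A}(W)\ket{0}\ket{0} = \sum_{k}\sum_{l\in[-2L,2L]^d} a_{l,k}\,\ket{l}\ket{k}.
\end{equation}
Now I would split the $\ket{k}$ register as $\ket{k} = \ket{0}$ or $\ket{k}$ with $k\neq 0$, giving
\begin{equation}
    \mathcal{A}(W)\ket{0}\ket{0} = \Big(\sum_{l} b_l \ket{l}\Big)\ket{0} + \sum_{l,\,k\neq 0} a_{l,k}\ket{l}\ket{k}.
\end{equation}
The first term is exactly the amplitude encoding of $b$ on the $\ket{k=0}$ subspace of the second register; normalizing, $\sum_l b_l\ket{l}\ket{0} = \lVert b\rVert_2 \cdot \frac{1}{\lVert b\rVert_2}\sum_l b_l\ket{l}\ket{0}$, and the remainder has support orthogonal to $\ket{0}$ in that register, so writing $\ket{\cdots}\ket{0^\perp} \coloneqq \sum_{l,k\neq 0} a_{l,k}\ket{l}\ket{k}$ reproduces the claimed identity. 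Renaming $\mathcal{A}(W)$ as the algorithm "$\mathcal{A}$" of the statement and absorbing the constant $\lceil\log(2L)\rceil$ vs $\lceil\log L\rceil$ in the register size into the $O(\poly)$ bound completes the argument; the complexity is $O(\poly(n,d))$ since $\mathcal{A}$ of \Cref{thm:fourqru} is, and $W$ has only polynomially more gates than $U$.

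The only genuinely non-routine step is verifying that $W(\alpha) = U(\alpha)^\dagger P U(\alpha)$ really does fall under \Cref{def:paulienc} — in particular that the daggered block produces legal Pauli-rotation gates (sign flip on the angle, gate order reversed, but same Pauli strings $P_l$) and that concatenating it with $P$ and $U(\alpha)$ keeps $N_f'$ and $L'$ polynomial; everything downstream is a direct invocation of \Cref{lem:fourier} and \Cref{thm:fourqru} plus bookkeeping of the $\ket{k=0}$ post-selection. I would make sure to state explicitly that the algorithm does \emph{not} perform the post-selection (which would be costly) but merely prepares the state whose $\ket{k=0}$-block is the amplitude encoding; this is what the $\ket{\cdots}\ket{0^\perp}$ notation signals, and it is all that the later sampling-based coefficient-extraction step needs.
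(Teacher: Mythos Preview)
Your proposal is correct and follows essentially the same approach as the paper: both recognize that $f(\alpha)=\bra{0}W(\alpha)\ket{0}$ with $W(\alpha)=U(\alpha)^\dagger P U(\alpha)$ a Pauli-encoded circuit on $2L$ parametrized gates, apply the algorithm $\mathcal{A}$ of \Cref{thm:fourqru} to $W$, and identify the $\ket{k=0}$ block of the resulting state with the amplitude encoding of $b$. You are in fact slightly more explicit than the paper in verifying that $U(\alpha)^\dagger$ still consists of admissible Pauli rotations (angle sign flipped, order reversed), which the paper leaves implicit.
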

Note that the probabilistic component here is unavoidable, as in general there is no reason for the function $f$ to have a Fourier decomposition which is a unit vector. 
Note that the theorem above is specific to observables that are Pauli strings $P$. In the \Cref{app:fourexp}, we prove that this result can efficiently be extended to a broader range of observables $O$, such as linear combinations of Pauli terms with a polynomial number of terms and polynomial spectral norm and local projectors.

Using $\mathcal{A}$ as a subroutine, we can perform \textit{\subroutine}, i.e., extract any coefficient $b_l$ up to additive error, by using controlled versions of $\mathcal{A}(U)$ in a Hadamard test. 
\begin{corollary}
    \label{cor:coefextract}
    With the same premise as in \Cref{cor:statefourfun}, there exists a $\poly(n,\epsilon^{-1})$ quantum algorithm that retrieves $b_l$ up to additive error $\epsilon$ for every $l$.
\end{corollary}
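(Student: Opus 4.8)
The plan is to combine the state-preparation unitary $\mathcal{A}(U)$ from \Cref{cor:statefourfun} with a Hadamard test in order to measure individual amplitudes of the state $\frac{1}{\lVert b\rVert_2}\sum_l b_l\ket{l}\ket{0}$. First I would note that $\mathcal{A}(U)$ is an honest unitary circuit of size $O(\poly(n,d))$ (by \Cref{thm:fourqru}), so we can implement a controlled version $c\text{-}\mathcal{A}(U)$ with only constant overhead. Acting with $c\text{-}\mathcal{A}(U)$ on $\ket{+}\ket{0}\ket{0}$ and then, conditioned on the control qubit, applying the projector $\ket{l}\bra{l}\otimes\ket{0}\bra{0}$ (realized unitarily by flagging the target registers equal to $\ket{l}\ket{0}$ onto an ancilla and using that ancilla as the conditioning bit) implements the standard Hadamard test for the overlap $\bra{l}\bra{0}\,\mathcal{A}(U)\,\ket{0}\ket{0} = b_l/\lVert b\rVert_2$. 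Measuring the control qubit in the $X$ basis estimates $\operatorname{Re}(b_l/\lVert b\rVert_2)$; inserting a phase gate gives $\operatorname{Im}(b_l/\lVert b\rVert_2)$. By a Chernoff/Hoeffding bound, $O(\epsilon'^{-2}\log\delta^{-1})$ repetitions suffice to estimate each part to additive accuracy $\epsilon'$.

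The remaining step is to remove the normalization, i.e.\ to recover $b_l$ itself rather than $b_l/\lVert b\rVert_2$. I would estimate $\lVert b\rVert_2^2$ directly: it equals the success probability of the post-selection in \Cref{cor:statefourfun}, namely the probability that a measurement of the ``flag'' register of $\mathcal{A}(U)\ket{0}\ket{0}$ returns the $\ket{0}$ outcome (equivalently, $\lVert b\rVert_2^2 = \lVert (\id\otimes\ket{0}\bra{0})\mathcal{A}(U)\ket{0}\ket{0}\rVert^2$, which one reads off from a computational-basis measurement of that register). This is again a Bernoulli parameter estimable to additive error by $O(\epsilon'^{-2}\log\delta^{-1})$ samples. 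Then $b_l = (b_l/\lVert b\rVert_2)\cdot\lVert b\rVert_2$, and since $\lVert b\rVert_2 \le \lVert b\rVert_1 \le \poly(n,d)$ (the Fourier support $\mathcal{L}$ has $\poly$-size and, for Pauli $P$, $\sum_l|b_l|$ is bounded via $|f(\alpha)|\le 1$ and the structure of \Cref{lem:fourier}), a short error-propagation argument shows that taking $\epsilon' = \epsilon/\poly(n,d)$ makes the product $\epsilon$-accurate. A union bound over the (at most a constant number of) estimated quantities rescales $\delta$ by a constant, all still within $\poly(n,\epsilon^{-1})$ total time and samples.

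The main obstacle I anticipate is the normalization bound: if $\lVert b\rVert_2$ were exponentially small, then post-selection succeeds with exponentially small probability and the whole scheme collapses, while if it were exponentially large the error amplification in the final multiplication would blow up. For Pauli observables $P$ one expects $\lVert b\rVert_2 = \Theta(1)$–to–$\poly$ because $f$ is a bounded function with a structured, polynomially-supported spectrum, but making this quantitative — and checking it survives the extension to polynomial-norm linear combinations of Paulis and local projectors mentioned after \Cref{cor:statefourfun} — is the delicate part. A secondary, more routine point is verifying that the index-matching subcircuit (testing the $\ket{l}$ register against a fixed target $l$) costs only $O(\poly(n,d))$ gates, which is immediate since $l$ is specified by $d\lceil\log L\rceil = O(\poly(n,d))$ bits.
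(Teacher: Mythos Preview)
Your core idea---a Hadamard test on a controlled $\mathcal{A}(U)$ to read off the amplitude on $\ket{l}\ket{0}$---is exactly what the paper does. The paper simply observes that $b_l=\bra{0}\bra{0}(V_l^{\dagger}\otimes I)\,\mathcal{A}(U)\,\ket{0}\ket{0}$ for any unitary $V_l$ with $V_l\ket{0}=\ket{l}$, runs the Hadamard test on the controlled unitary $(V_l^{\dagger}\otimes I)\mathcal{A}(U)$, and estimates the real and imaginary parts with $O(\epsilon^{-2})$ shots each.

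Your second paragraph, however, contains a genuine error, stemming from a misreading of \Cref{cor:statefourfun}. The factor $1/\lVert b\rVert_2$ displayed there describes the \emph{post-selected} state; the full unitary output is $\mathcal{A}(U)\ket{0}\ket{0}=\sum_l b_l\ket{l}\ket{0}+\ket{\cdots}\ket{0^{\perp}}$, with post-selection on $\ket{0}$ succeeding with probability $\lVert b\rVert_2^{2}$ (as the appendix makes explicit). Hence the Hadamard-test overlap $\bra{l}\bra{0}\,\mathcal{A}(U)\,\ket{0}\ket{0}$ already equals $b_l$, not $b_l/\lVert b\rVert_2$. Your two claims are in fact mutually inconsistent: if the amplitude on $\ket{l}\ket{0}$ were $b_l/\lVert b\rVert_2$, the probability of the flag register returning $\ket{0}$ would be $1$, not $\lVert b\rVert_2^{2}$. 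Carried out literally, your procedure---multiplying the Hadamard-test estimate by the estimated $\lVert b\rVert_2$---would output $b_l\cdot\lVert b\rVert_2$ rather than $b_l$. Once this is fixed the renormalization step disappears entirely, your ``main obstacle'' evaporates (note also that $\lVert b\rVert_2\le 1$ automatically, being the norm of a projection of a unit vector, so the ``exponentially large'' worry was vacuous anyway), and your argument reduces to the paper's one-line proof.
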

Furthermore, the Fourier amplitude-encoded state realizes a particular quantum feature map, which can be used to construct kernel-based machine learning methods even when the spectrum is exponentially large.

\begin{figure*}
\resizebox{\linewidth}{!}{
    \begin{quantikz}
        \lstick{$\ket{0}$} & \qw & \gate[3]{U_A} & \gate{Z(\pi \alpha_1)} 
        & \gate[3]{U_B} & \gate[3]{ZZZ(\pi \alpha_2)}& \\
        \lstick{$\ket{0}$} & \qw &               &                
        &               &                      & \\
        \lstick{$\ket{0}$} & \qw &               &                
        &               &                      &
    \end{quantikz}
    \begin{quantikz}
        \lstick{$\ket{0}$} & \qwbundle{m} &               & \gate{V_+} & \gate{V_-} &                 &          &          &          &            &            &          &          &          &\\
        \lstick{$\ket{0}$} & \qwbundle{m} &               &            &            &                 &          &          &          & \gate{V_+} & \gate{V_-} &          &          &          &\\
        \lstick{$\ket{0}$} & \qw          & \gate[3]{U_A} & \octrl{-2} &  \ctrl{-2} & \gate[3]{U_B}   & \ctrl{3} &          &          &            &            & \ctrl{3} &          &          &\\ 
        \lstick{$\ket{0}$} & \qw          &               &            &            &                 &          & \ctrl{2} &          &            &            &          & \ctrl{2} &          &\\  
        \lstick{$\ket{0}$} & \qw          &               &            &            &                 &          &          & \ctrl{1} &            &            &          &          & \ctrl{1} &\\  
        \lstick{$\ket{0}$} & \qw          &               &            &            &                 & \targ{}  & \targ{}  & \targ{}  & \octrl{-4} & \ctrl{-4}  & \targ{}  & \targ{}  & \targ{}  &  
    \end{quantikz}
}
\caption{(left) A parametrized circuit $U(\alpha)$ yielding a state $\sum_k \sum_l a_{l,k} e^{i\pi l\cdot \alpha} \ket{k}$. (right) The corresponding circuit $\mathcal{A}(U)$ returning a state amplitude-encoding Fourier coefficients as  $\sum_k \sum_l a_{l,k} \ket{l} \ket{k}$}.
\label{fig:freqsamp}.
\end{figure*}
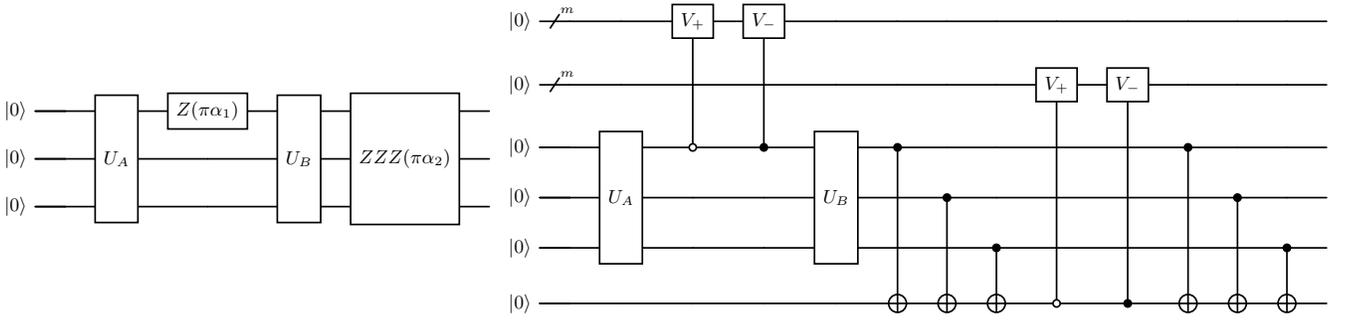

\section{Learning Parametrized Quantum Circuits}
\label{sec:learningpqc}
In this section, we study a concept class based on parametrized quantum circuits, which we call the \cca~ concept class. 
Then we describe an efficient quantum learner, based on the \textit{\subroutine} algorithm proposed in the previous section. 
Finally, we show that this concept class is hard to learn classically, effectively proving a learning separation. 

\subsection{Concept Class Definition}
We consider a family of parametrized circuits with Pauli encodings, as in \Cref{def:paulienc}. We define $x$ as the bitstring describing the fixed gates of the circuit. 
The parametrized gates have a known nature ($P_l$) but are parametrized by unknown parameters ($\alpha_l$). 
This gives rise to the following concept class, which we call \textit{\cca}. 
\begin{definition}[\cca~concept class]
\label{def:parcircCC}
    Consider a parametrized circuit $U$ on $n$ qubits as defined in \Cref{def:paulienc}, we note as $x$ the bit string describing the fixed gates, and $\alpha \in [0,1]^{d_n}$ the input parameters of the circuit. We have:
    \begin{equation}
        U(x,\alpha) \ket{0} = \sum_{1 \leq k \leq 2^n } \sum_{ l \in [-L,+L]^d_n} a_{l,k}(x) e^{i\pi l\cdot \alpha} \ket{k} \,.
    \end{equation}
    For a given observable $O$, we define the concept class $\mathcal{C}^{PQC}_f=\{\mathcal{C}^{PQC}_{n,f}\}_n$, for a qubit number $n$ and a number of unknown parameters $d_n$ scaling with $n$ as $$d_n = f(n)$$ as follows,
    \begin{multline}
        \mathcal{C}^{PQC}_{n,f} \coloneqq \{c_\alpha : x\in\{0,1\}^{n} \rightarrow\\
        \bra{0} U(x,\alpha)^{\dagger} O U(x,\alpha) \ket{0} \}_{\alpha \in [0,1]^{f(n)}}\,.
    \end{multline}
\end{definition}

This concept class has been studied, in particular, its covering number has been formally upper-bounded \cite{caro_generalization_2022}, yielding a good generalization performance. 
As a consequence, as long as $d\in\poly(n)$, this class is PAC-learnable. 
Here we will be focusing on the possibility and impossibility of \textit{efficient} PAC learning.

\subsection{Efficient Quantum Learner}
\label{subsec:circlearn}
We consider the concept class as defined in \Cref{def:parcircCC}.
In this subsection, we aim to prove that $\mathcal{C}^{PQC}_{\log}$ is quantum efficiently PAC learnable as in \Cref{def:pac}.
To do so, we present an efficient quantum learning algorithm that uses the Fourier sampling procedure described in \Cref{sec:freqsamp}. 

We are given a $T$ sized dataset $\{x_t,y_t\}_{1\leq t \leq T}$ for an unknown fixed concept $c_\alpha$ for data sampled over $\mathcal{D}$, a maximum probability of failure as $\delta$ and the required accuracy as $\epsilon$.
The problem size is the number of qubits $n$, and we fix the number of parameters as $d=\log(n)$, each uploaded a constant number of times $L$. 
This setup yields a number of frequencies $m=|\mathcal{L}| = (4L+1)^d\in \Theta(\poly(n))$. 
We describe the training and inference stage of the proposed algorithm.
 
\textbf{Training:} First we choose the measurement accuracy $\epsilon_b$ for the Fourier coefficients. 
For each Fourier coefficient, a simple sample average strategy is used, requiring $\epsilon^{-2}_b$ shots (a quadratic improvement may be possible here). 
For each bitstring $x_t$ specifying a circuit and for each frequency $l$, the coefficient $b_l(x_t)$ is retrieved using \Cref{cor:coefextract} up to additive error $\epsilon_b$. 
This procedure requires the execution of $mT/\epsilon_b^2$ poly-depth quantum circuits.
The retrieved Fourier coefficients are stacked in the matrix $\hat{B}$ where the rows correspond to the frequencies indexed by $l$ and the columns to the samples indexed by $t$. 
Similarly, we define the exact Fourier coefficients as $B$ and the shot noise $E=B-\hat{B}$.

We have that $y=b(x)\cdot w(\alpha)$, where $w$ is an unknown weight vector with $w_l(\alpha) \coloneqq e^{i\pi l\cdot\alpha}$. 
In other words, the concept is linear with respect to $B$, and stacking all labels in a vector $Y$, we have $Bw=Y$. 
Our goal is to minimize the mean square error on unseen data. 
We look for a hypothesis function $h$ that minimizes the error defined as.
\begin{equation}
    \mathcal{L}_c(h) = \mathbb{E}_{x\sim D}[\abs{h(x)-c_{\alpha}(x)}^2]\,.
\end{equation}
We use a regression technique (LASSO, see details in \Cref{app:lasso}) that yields a mean square error minimizer, with an additional constraint on the 1-norm of the weight vectors, which should be upper bounded by $\Lambda_1$.
Minimizers returned by this regression technique belong to the following set of functions, which we define as our hypothesis class:
\begin{equation}
    \mathcal{H}=\{b \xrightarrow[]{h_w} b\cdot w \mid \norm{w}_1 \leq \Lambda_1\}
\end{equation}
Using \Cref{thm:lasso} with $\Lambda_1 = \norm{w}_1 = m$ and $r_\infty=1$ we know that in order to reach $\epsilon$ accuracy on the mean square error with $1- \delta$ accuracy we can choose the parameters as follows, (noting that $\epsilon_y = 0$).
\begin{align}
T = \frac{16 m^4 \sqrt{ 2 \log\left( \frac{2m}{\delta} \right)}}{\epsilon^2},~
\epsilon_b = \frac{0.2 \epsilon}{m}
\end{align}
We note that $T\in \tilde{\Theta}(\poly(n,\epsilon^{-1},\delta^{-1}))$. In addition, the training stage of our algorithm outputs a weight vector $\hat{w}$ executing of $\tilde{\Theta}(m^7/\epsilon^4) \subset \tilde{\Theta}(\poly(n,\epsilon^{-1},\delta^{-1}))$ poly-depth quantum circuits. Therefore, both the number of samples and the runtime required to output a model are polynomial in $n$, $\epsilon^{-1}$, and $\delta^{-1}$.

\textbf{Inference:} Given a new datapoint $x_{t'}$, one retrieves the Fourier coefficients $b_{t'}$ using the algorithm in \Cref{sec:freqsamp} execution of $m/\epsilon_b^2 \in \tilde{\Theta}(m^3/\epsilon^2)  \subset \tilde{\Theta}(\poly(n,\epsilon^{-1},\delta^{-1}))$ poly-depth circuits. 
Then using the weight vector $\hat{w}$ derived in the training phase, the model returns $\hat{y}_{t'} = \hat{w} \cdot \hat{b}(x_{t'})$. 
\Cref{thm:lasso} guarantees that $\lvert \hat{y}_{t'} - y_t \rvert \leq \epsilon$ with probability $1-\delta$.
The runtime of the model is polynomial in $n$, $\epsilon^{-1}$ and $\delta^{-1}$. 

This concludes the proof of our first main result, which we summarize in the following theorem.
\begin{theorem}
\label{thm:PQCqeff}
    The \cca~concept class $\mathcal{C}^{PQC}_{\log}$ as in \Cref{def:parcircCC} is efficiently quantum PAC learnable as in \Cref{def:pac}.
\end{theorem}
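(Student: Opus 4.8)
The plan is to reduce the learning task to a noisy linear regression in a polynomial-dimensional Fourier feature space, where the feature vectors are produced by the \subroutine~subroutine and the regression itself is run classically. The first step is to note, using \Cref{eq:1}, that every concept in $\mathcal{C}^{PQC}_{n,\log}$ is linear in a fixed feature map: writing $b(x)\in\mathbb{C}^{\mathcal{L}}$ for the Fourier-coefficient vector of the circuit whose fixed gates are specified by $x$, and $w(\alpha)_l\coloneqq e^{i\pi l\cdot\alpha}$, we have $c_\alpha(x)=b(x)\cdot w(\alpha)$. Because $d_n=\log n$ and each parameter is uploaded a constant number $L$ of times, the frequency set has size $m=\lvert\mathcal{L}\rvert=(4L+1)^{\log n}=n^{O(1)}$, and moreover $\norm{w(\alpha)}_1\le m$ while $\norm{b(x)}_\infty\le 1$. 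Hence the target lies in $\mathcal{H}=\{b\mapsto b\cdot w:\norm{w}_1\le m\}$, and the whole problem is an instance of $\ell_1$-constrained linear regression over $m=\poly(n)$ features --- which is exactly the setting for which we have the generalization bound \Cref{thm:lasso}.

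The second step is the quantum side of the learner. Given a training input $x_t$, the circuit $U(x_t,\cdot)$, and the observable $O$, \Cref{cor:coefextract} (built on the amplitude encoding of \Cref{cor:statefourfun}) estimates each coordinate $b_l(x_t)$ to additive accuracy $\epsilon_b$ using $O(\poly(n)/\epsilon_b^2)$ executions of a $\poly(n)$-depth circuit. Running this over all $t\le T$ and all $l\in\mathcal{L}$ yields the noisy design matrix $\hat B=B-E$, with $E$ bounded entrywise by $\epsilon_b$, at a total quantum cost of $\tilde\Theta(mT/\epsilon_b^2)$ circuit executions. At inference time, on a fresh $x_{t'}$ the same subroutine re-extracts $\hat b(x_{t'})$, at cost $\tilde\Theta(m/\epsilon_b^2)$, and the hypothesis outputs $\hat w\cdot\hat b(x_{t'})$.

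The third step is the classical side: feed $(\hat B, Y)$ to LASSO with $\ell_1$-radius $\Lambda_1=m$ (details in \Cref{app:lasso}) to obtain $\hat w$, and invoke \Cref{thm:lasso} with $r_\infty=1$ and $\epsilon_y=0$. This fixes how small $\epsilon_b$ must be and how large $T$ must be so that the returned hypothesis achieves $\mathcal{L}_c(h_{\hat w})\le\epsilon$, and thereby the pointwise $(\epsilon,\delta)$-guarantee of \Cref{def:pac}; the choices $T=\tilde\Theta(m^4/\epsilon^2)$ and $\epsilon_b=\Theta(\epsilon/m)$ suffice. Substituting $m\in\poly(n)$ then makes $T$, the training runtime $\tilde\Theta(m^7/\epsilon^4)$, and the inference runtime $\tilde\Theta(m^3/\epsilon^2)$ all $\poly(n,\epsilon^{-1},\delta^{-1})$, so both the learner and the hypothesis it outputs run in quantum polynomial time --- which establishes efficient quantum PAC learnability in the sense of \Cref{def:pac}.

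The step I expect to be the main obstacle is the error analysis driving \Cref{thm:lasso}: in contrast to textbook LASSO the perturbation $E$ sits in the design matrix rather than only in the labels, so one needs the variant of the bound in \Cref{app:lasso} that tolerates bounded feature noise, tracks how it inflates both the $\ell_1$-norm of the fitted weights and the empirical risk, and still yields a population mean-square error of order $\epsilon$. Pinning down the right dependence of $\epsilon_b$ on $m$ and $\epsilon$ there --- together with the translation from the mean-square guarantee to the pointwise high-probability form demanded by \Cref{def:pac} --- is what separates ``PAC learnable'' from ``\emph{efficiently} PAC learnable'', and is therefore the crux of the theorem.
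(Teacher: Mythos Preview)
Your proposal is correct and follows essentially the same approach as the paper: linearize each concept as $c_\alpha(x)=b(x)\cdot w(\alpha)$ in the $(4L+1)^{\log n}=\poly(n)$-dimensional Fourier feature space, use \Cref{cor:coefextract} to extract the noisy design matrix, run LASSO with $\Lambda_1=m$, $r_\infty=1$, $\epsilon_y=0$, and read off the sample and runtime bounds from \Cref{thm:lasso}. Your identified obstacle (feature noise rather than label noise in the LASSO analysis) is exactly what \Cref{app:lasso} handles, and your parameter choices $T=\tilde\Theta(m^4/\epsilon^2)$, $\epsilon_b=\Theta(\epsilon/m)$, and the resulting $\tilde\Theta(m^7/\epsilon^4)$ training and $\tilde\Theta(m^3/\epsilon^2)$ inference costs match the paper's.
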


\subsection{Hardness of the learning problem}
\label{subsec:hardcirc}
In this subsection, we prove that the \cca~ concept class is not classically efficient PAC learnable unless $\mathsf{BQP} \subset \mathsf{P/poly}$, for more on this class see \Cref{app:clxty}. 

 We choose a promise $\mathsf{BQP}$ language $\mathcal{L}$, solved by $U \in \mathcal{U}(2^n)$, meaning that the sign of $\Tr[O U \ketbra{x}{x} U^\dagger]$ decides $\mathcal{L}$, where $O$ is some simple observable. We define the concept class as in \Cref{def:parcircCC} as the circuit $U$ with added parametrized gates $P_l$ at fixed locations with unknown parameters $\alpha_l$. 

The concept class contains at least one concept that is BQP-hard, for $\alpha=0$. 
Therefore, using Lemma 2 in \cite{molteni_exponential_2024} no classical learner can learn $c_0$ (if \Cref{conj:1} is true). 
This yields the following theorem.
\begin{theorem}
\label{thm:PQChard}
If $\mathsf{BQP} \not\subset \mathsf{P/poly}$ then the \cca~ concept class \Cref{def:parcircCC} is not classically efficient PAC learnable as in \Cref{def:pac}.
\end{theorem}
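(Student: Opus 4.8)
The plan is to exhibit, inside the \cca~concept class $\mathcal{C}^{PQC}_f$ of \Cref{def:parcircCC} — for any growth function $f$, and in particular the logarithmic one of \Cref{thm:PQCqeff}, so that the two results together pinch off a genuine separation — a single concept whose (approximate) evaluation is $\mathsf{BQP}$-hard, and then to invoke Lemma~2 of \cite{molteni_exponential_2024}, which turns the mere presence of such a concept into a conditional obstruction to classical efficient PAC learning. To build the concept, I would fix a $\mathsf{BQP}$-complete promise problem $\mathcal{L}$ together with a uniform family of $\poly(n)$-size circuits $\{U_n\}$ and a fixed single-qubit Pauli observable $O$ such that $\mathrm{sgn}(\bra{x}U_n^\dagger O U_n\ket{x})$ decides $x\in\mathcal{L}$ with a promise gap; by the usual in-circuit amplification (majority over polynomially many repetitions) I may assume this gap is a constant, say $\lvert\bra{x}U_n^\dagger O U_n\ket{x}\rvert\ge 1/2$. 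I then read the prospective instance $x$ as the bitstring ``describing the fixed gates'' of a concept circuit: the fixed part is the pattern $X^{x_1}\!\otimes\cdots\otimes X^{x_n}$ that prepares $\ket{x}$ on the input register (ancillas in $\ket{0}$), followed by the fixed block $U_n$; after padding to the appropriate qubit count this manifestly conforms to \Cref{def:paulienc} with $N_f\in\poly(n)$. I insert the required $d_n=f(n)$ parametrized gates $V_l(\alpha)=e^{i\pi P_l\alpha_l}$ at arbitrary fixed positions — concretely, on idle ancillas outside the support of $O$ — so that at $\alpha=0$ every $V_l$ is the identity and $U(x,0)$ collapses to exactly the circuit above. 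The resulting family is a bona fide instance of \Cref{def:parcircCC}, and its concept $c_0$ satisfies $c_0(x)=\bra{x}U_n^\dagger O U_n\ket{x}$, whose sign decides $\mathcal{L}$.

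Next I would note that, because the gap is a constant, approximating $c_0(x)$ to additive error $\epsilon<1/4$ already fixes its sign and hence decides $x\in\mathcal{L}$; since $\mathcal{L}$ is $\mathsf{BQP}$-complete, this shows that approximate evaluation of the concept $c_0\in\mathcal{C}^{PQC}_{n,f}$ is $\mathsf{BQP}$-hard. This is precisely the hypothesis of Lemma~2 of \cite{molteni_exponential_2024}: a concept class containing a concept that is $\mathsf{BQP}$-hard to evaluate is not classically efficiently PAC learnable unless $\mathsf{BQP}\subseteq\mathsf{P/poly}$ — the intuition being that a classical efficient learner supplied with a favourable polynomial-size dataset, taken as non-uniform advice, would output a polynomial-size classical hypothesis reproducing $c_0$, i.e.\ a classical-with-advice simulation of the quantum computation. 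Taking the contrapositive, if $\mathsf{BQP}\not\subset\mathsf{P/poly}$ then $\mathcal{C}^{PQC}_f$ — and in particular $\mathcal{C}^{PQC}_{\log}$ — is not classically efficiently PAC learnable, which is the statement of the theorem.

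The hard part will be the accuracy/promise bookkeeping needed to apply Lemma~2 cleanly. Efficient PAC learning only delivers a hypothesis that is $\epsilon$-close to $c_0$ for an inverse-polynomial $\epsilon$, and only on a $(1-\delta)$-fraction of inputs under the chosen distribution, whereas $\mathsf{BQP}$-hardness is a worst-case claim about individual inputs; reconciling the two needs (i) the amplification above, so that an inverse-polynomial $\epsilon$ still resolves the decision bit, and (ii) a suitable choice of input distribution in the reduction — for instance concentrated on, or a point mass at, the instance of interest — threaded through the $\mathsf{P/poly}$/advice argument so that ``correct on a $1-\delta$ fraction'' upgrades to ``correct on the instance we need''. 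This is exactly the content packaged inside Lemma~2 of \cite{molteni_exponential_2024} (whose statement, depending on the formulation, is phrased via \Cref{conj:1}), and the place where its hypotheses must be matched with care. The remaining steps — verifying that the embedded circuit, the added Pauli rotations and the observable all conform to \Cref{def:paulienc} and \Cref{def:parcircCC}, and that $\alpha=0$ lies in the parameter domain $[0,1]^{d_n}$ — are routine.
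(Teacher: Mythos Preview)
Your proposal is correct and follows essentially the same approach as the paper: embed a $\mathsf{BQP}$-complete promise problem into the concept class by letting the fixed gates encode the input $x$ (preparing $\ket{x}$ and applying the hard circuit $U_n$), append the parametrized Pauli rotations so that $\alpha=0$ recovers the hard function, and then invoke Lemma~2 of \cite{molteni_exponential_2024}. You are simply more explicit than the paper about the amplification step, the placement of the parametrized gates on idle ancillas, and the promise/accuracy bookkeeping underlying the cited lemma.
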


\Cref{thm:PQChard} and \Cref{thm:PQCqeff} together prove a separation between quantum and classical learners for the \cca~ problem. 
However, this case is rather contrived and does not have any direct physical application. 
In the next section, we adapt this result to when a circuit compiles a time evolution, which yields a more practical learning problem.

\section{Learning Time Evolution}
\label{sec:learningdyn}
\subsection{Concept Class Definition}
Our main result is a learning separation for the Hamiltonian dynamics learning problem. 

We consider a parametrized Hamiltonian $H(x,\alpha)$ the input is $x \in  \mathbb{R}^n$ and $\alpha \in \mathbb{R}^d$ are unknown parameters. 
We are interested in its time evolution for a fixed time $\tau$, as $U_n(x,\alpha) = e^{i\tau(H(x,\alpha))}$.

This gives rise to the following concept class, which we call \textit{\ccb}.
\begin{definition}[\ccb~concept class]
\label{def:CCte}
    Consider a sequence of parametrized Hamiltonian $\{H_n(x,\alpha)\}_n$, where each $H_n$ operates on $n$ qubits and is described by continuous parameters $\alpha \in [0,1]^d_n$ and bitstrings $x$ of length $s_n$. 
    For a given sequence of observables $\{O_n\}_n$, we define the concept class $\mathcal{C}^H_f=\{\mathcal{C}^H_{f,n}\}_n$, for $n$ qubits, $d_n$ parameters scaling with $n$ as $d_n=f(n)$, and a fixed real number $\tau$ , resulting in a time evolution $U_n(x,\alpha) = e^{i\tau H_n(x,\alpha)}$, as follows,
    \begin{multline}
        \mathcal{C}^H_{n,f} \coloneqq \{c_\alpha : x\in\{0,1\}^{s_n} \rightarrow\\ \bra{1} U_n(x,\alpha)^{\dagger} O U_n(x,\alpha) \ket{0} \}_{\alpha \in [0,1]^{f(n)}} \,.
    \end{multline}
\end{definition}

\subsection{Connection between the two concept classes}

The \ccb~concept class can be related to the \cca~concept class via Hamiltonian simulation.
Given the class $\mathcal{C}^{{H}}_{n,d_n}$, by using Hamiltonian simulation on the underlying Hamiltonians (taking into account the parametrizations), we obtain parametrized circuits. 
The precision we use in Hamiltonian simulation dictates how closely functions in one class approximate the functions in the other in a precisely quantified way, and at the same time, the depth of the parametrized circuit. 

We illustrate this with an example. 
Consider the Ising Hamiltonian on an arbitrary graph with a transverse field of unknown strength $\alpha$.
The arbitrary graph is described by bitstrings indicating whether an edge exists or not, $x_{i,j} =1$ if $(i,j)\in E$.
\begin{equation}
    H(x,\alpha) = \sum_{i,j} x_{i,j} Z_i Z_j + \alpha \sum_i  X_i
\end{equation}
The first order Trotterization of $r$ steps, yields the following parametrized quantum circuit for the parameter $\alpha\tau/r$.
\begin{equation}
    U_r(x,\alpha) = \left(\prod_{i,j} Z_iZ_j(x_{i,j}\tau/r) \prod_{i} X_i(\alpha\tau/r)\right)^r
\end{equation}
This is illustrated in \Cref{fig:hampqc}.

We use this connection to show that the learning algorithm that learns $\mathcal{C}^{{PQC}}_{\log}$ can also learn the time-dynamics class $\mathcal{C}^{{H}}_{\log}$. 
In the rest of this section, we will adapt the results of the previous section to a quantum circuit approximating this evolution. 
As the quantum learner we devised previously is only efficient for polynomially sized spectrum, we investigate circuit compilation such that this is guaranteed.   

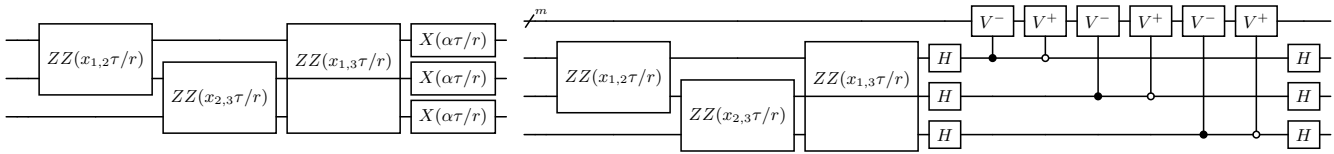
\begin{figure*}
    \resizebox{\linewidth}{!}{
        \centering
        \begin{quantikz}[row sep=0.1cm, column sep=0.2cm,transparent]
            &              && \gate[2]{ZZ(x_{1,2}\tau/r)}                           
            & 
            &\gate[3,label style={yshift=0.3cm}]{ZZ(x_{1,3}\tau/r)} 
            & \gate{X(\alpha\tau/r)}&
            \\
            &              &&
            & \gate[2]{ZZ(x_{2,3}\tau/r)}
            &\linethrough 
            & \gate{X(\alpha\tau/r)}&
            \\
            &              &&
            &
            &
            & \gate{X(\alpha\tau/r)}&
        \end{quantikz}
        \begin{quantikz}[row sep=0.1cm, column sep=0.2cm,transparent]
            & \qwbundle{m} &&&&
            && \gate{V^-} &\gate{V^+}
            & \gate{V^-} &\gate{V^+}
            & \gate{V^-} &\gate{V^+} &
            &
            \\
            &              && \gate[2]{ZZ(x_{1,2}\tau/r)}                           
            & 
            &\gate[3,label style={yshift=0.3cm}]{ZZ(x_{1,3}\tau/r)} 
            &\gate{H}& \ctrl{-1} & \octrl{-1} && &&
            & \gate{H}
            &
            \\
            &              &&
            & \gate[2]{ZZ(x_{2,3}\tau/r)}
            &\linethrough 
            & \gate{H}&& & \ctrl{-2} & \octrl{-2} && 
            &\gate{H}
            &
            \\
            &              &&
            &
            &
            & \gate{H} && && & \ctrl{-3} & \octrl{-3}
            & \gate{H}
            &
        \end{quantikz}
    }
    \caption{Example of the Trotterization of the $\tau$-time evolution of $H(x,\alpha) = \sum_{i,j} x_{i,j} Z_i Z_j + \alpha \sum_i  X_i$ for three qubits in $r$ step. The circuit presented is just one Trotter step and shall be repeated $r$ times. (left) Compilation as a Pauli encoded parametrized quantum circuit as in \Cref{def:paulienc}. (right) \subroutine~applied to the original circuit.}
    \label{fig:hampqc}
\end{figure*}

With the same strategy as that of the previous section, we show that the learning algorithm for the \cca~concept class also works as an efficient learning algorithm for the \ccb~concept class.
For completeness, we have derived the full proof in \Cref{app:learnccb} for first order Trotterization.
The key idea is that Hamiltonian simulation allows us to construct a good hypothesis class for our learning algorithm based on the concept class. In fact, as we explain later, all Hamiltonian simulation methods lead to function families with essentially the same learning characteristics.
\begin{theorem}
\label{thm:TEqeff}
    The \ccb~concept class $\mathcal{C}^{H}_{\log}$ as in \Cref{def:CCte} is efficiently quantum PAC learnable as in \Cref{def:pac}.
\end{theorem}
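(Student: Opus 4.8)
The plan is to prove \Cref{thm:TEqeff} by reducing the \ccb~concept class to the \cca~concept class of \Cref{def:parcircCC}, for which \Cref{thm:PQCqeff} already supplies an efficient quantum learner, and then absorbing the Hamiltonian-simulation error into the label-noise term of the LASSO analysis. Fix $n$, a target concept $c_\alpha\in\mathcal{C}^{H}_{\log,n}$ with $d=\log n$ unknown parameters, a distribution $\mathcal{D}$ on $\{0,1\}^{s_n}$, and target accuracy $\epsilon$ and confidence $\delta$; the dataset is $\{(x_t,y_t)\}_t$ with $y_t=c_\alpha(x_t)=\bra{1}e^{-i\tau H_n(x_t,\alpha)}\,O\,e^{i\tau H_n(x_t,\alpha)}\ket{0}$.

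\textbf{Step 1 (compilation).} Apply first-order Hamiltonian simulation to $e^{i\tau H_n(x,\alpha)}$, exactly as in the Ising example of \Cref{fig:hampqc}: an $r$-step product formula produces a Pauli-encoded circuit $\tilde U(x,\alpha)$ (\Cref{def:paulienc}) in which all $x$-dependent and $\alpha$-independent exponentials become fixed gates, and the $\alpha$-dependent exponentials become the parametrized gates after the rescaling $\alpha_j\mapsto\tau\alpha_j/(\pi r)$ that puts each rotation in the form $e^{i\pi P\alpha'}$. Standard first-order product-formula error bounds give $\lVert e^{i\tau H_n(x,\alpha)}-\tilde U(x,\alpha)\rVert\le\eta$ uniformly in $x$ and $\alpha\in[0,1]^{d}$ with $r\in\poly(n,\tau,\eta^{-1})$ for the bounded-norm, poly-term Hamiltonians we consider; hence $\tilde U$ has $\poly(n,\eta^{-1})$ gates and $\tilde c_\alpha(x):=\bra{1}\tilde U(x,\alpha)^{\dagger}O\,\tilde U(x,\alpha)\ket{0}$ is (after pushing the $H$-description bitstring forward to the fixed-gate-description bitstring) a concept of the form in \Cref{def:parcircCC} with still $d=\log n$ parameters.

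\textbf{Step 2 (error propagation and reduction).} From $|\bra{1}(A^{\dagger}OA-B^{\dagger}OB)\ket{0}|\le 2\lVert O\rVert\,\lVert A-B\rVert$ we get $|c_\alpha(x)-\tilde c_\alpha(x)|\le 2\lVert O\rVert\,\eta=:\epsilon_y$ for all $x$. The concept is an off-diagonal matrix element rather than an expectation value, but by the polarization identity $\bra{1}M\ket{0}$ is a fixed linear combination of four diagonal quantities $\bra{w_k}\tilde U^{\dagger}O\tilde U\ket{w_k}$ with $\ket{w_k}\propto\ket{0}+i^k\ket{1}$, each obtained by prepending to $\tilde U$ a fixed gate mapping $\ket{0}$ to $\ket{w_k}$; so \Cref{cor:statefourfun,cor:coefextract} apply to this linear combination and return, up to additive error $\epsilon_b$, the Fourier coefficients $\tilde b_l(x)$ of $\tilde c_\alpha$. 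Writing $\tilde c_\alpha(x)=\tilde b(x)\cdot w(\alpha)$ with $w_l(\alpha)=e^{i\pi l\cdot\alpha}$, the labels obey $y_t=\tilde b(x_t)\cdot w(\alpha)+\mathrm{err}(x_t)$, $|\mathrm{err}(x_t)|\le\epsilon_y$, which is precisely the hypothesis of the LASSO bound \Cref{thm:lasso} but with $\epsilon_y>0$ rather than $0$. Re-running the sample-complexity computation of \Cref{subsec:circlearn} with this $\epsilon_y$ and choosing $\eta=\Theta(\epsilon/\lVert O\rVert)$, $\epsilon_b=\Theta(\epsilon/m)$, the simulation error, the shot noise and the generalization error each contribute $O(\epsilon)$, so the same learner outputs $\hat w$ with $\Pr_{x\sim\mathcal{D}}(|\hat w\cdot\tilde b(x)-c_\alpha(x)|\le\epsilon)\ge 1-\delta$ using $T\in\tilde{\Theta}(\poly(m,\epsilon^{-1},\log\delta^{-1}))$ samples and $\tilde{\Theta}(\poly(m,\epsilon^{-1}))$ poly-depth circuits, now of depth $\poly(n,\epsilon^{-1})$ because of the Trotter expansion; inference is as in \Cref{subsec:circlearn}.

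\textbf{Main obstacle.} Every estimate above is polynomial only if the Fourier-spectrum size $m=(4L+1)^{d}$ of the compiled circuit is polynomial in $n$; with $d=\log n$ this forces the number $L$ of re-uploads of each unknown parameter to be essentially $O(1)$, whereas a naive $r$-step Trotterization re-uploads each parameter $\Theta(r)$ times and $r$ must grow to suppress the Trotter error. The real work is therefore the compilation bookkeeping that keeps $L$ (hence $m$) polynomial while keeping the Trotter error and the circuit depth polynomial as well: one exploits that each $\alpha_j$ multiplies a single bounded-norm local term, so the parametrized part of a Trotter layer is $O(1)$ gates, and one merges parametrized rotations separated only by fixed gates commuting with them; equivalently, one truncates the Fourier expansion of $c_\alpha$ to the bandwidth set by the fixed time $\tau$ and the local norms, which is legitimate because $c_\alpha$ is analytic in $\alpha$ with rapidly decaying Fourier coefficients beyond that scale. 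Once $m\in\poly(n)$ is secured, the rest---the polarization trick and re-running \Cref{thm:lasso} with $\epsilon_y>0$---is routine, and the same argument goes through for any Hamiltonian-simulation primitive (higher-order product formulas, LCU, qubitization), since each produces a circuit whose $\alpha$-dependence is carried by the same small family of rescaled Pauli rotations and hence the same (truncated) Fourier support; the full proof for first-order Trotter is given in \Cref{app:learnccb}.
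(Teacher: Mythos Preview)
Your Steps 1 and 2 are exactly the paper's argument: Trotterize, treat the Trotter error as label noise $\epsilon_y$, and rerun the LASSO analysis of \Cref{thm:lasso} with $\epsilon_y>0$ in place of the $\epsilon_y=0$ used for \Cref{thm:PQCqeff}. The polarization step for the off-diagonal $\bra{1}\cdots\ket{0}$ is a careful detail the paper's appendix does not spell out.

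The divergence is your ``Main obstacle.'' The paper does \emph{not} try to keep the number $L$ of re-uploads bounded. It simply takes $L=r$ (each $\alpha_s$ appears once per Trotter step), so the spectrum size is $m=(4r+1)^d$; the Trotter-error budget $\epsilon_y\le\epsilon/2$ forces $r\gtrsim t^2A/\epsilon$, and with $d=\log n$ this gives $m\sim(C/\epsilon)^{\log n}$. Plugging this directly into the sample and shot counts of \Cref{subsec:circlearn} yields $T\in\tilde\Theta(\epsilon^{-(4d+2)})$ and total circuit executions $\tilde\Theta(\epsilon^{-(7d+4)})$, which the paper declares to be in $\poly(n,\epsilon^{-1},\delta^{-1})$ because $d=O(\log n)$. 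Your two proposed fixes---merging parametrized rotations across fixed gates that commute with them, or truncating the Fourier expansion to a $\tau$-dependent bandwidth---are neither what the paper does nor substantiated: the first fails in general because the $H'(x)$ Trotter layers do not commute with the $P_s$, and the second is a different algorithm you do not analyse. The paper's resolution of the obstacle is pure arithmetic (let $L$ grow with $1/\epsilon$ and rely on $d$ being logarithmic), not a compilation trick; you have overcomplicated this step and, as written, left it unjustified.
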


\subsection{Hardness of the learning problem}
In this subsection, we prove that the \ccb~ concept class is not classically efficient PAC learnable unless $\mathsf{BQP} \subset \mathsf{P/poly}$.

Specifically, we will be proving that the special case of the \ccb~ concept class, where the input is a bit string that specifies the initial state, and the Hamiltonian is parametrized only by $\alpha$ is hard.

For this, we use Lemma 3 in \cite{molteni_exponential_2024}, based on \cite{childs_quantum_2004}, which is restated here.
\begin{lemma}
    \label{lem:Hhard}[from \cite{molteni_exponential_2024}]
    For any $k$-gate quantum circuit $U = U_k\cdots U_2U_1$ acting on $n$ qubits there exists a local Hamiltonian $H$ such that for any $n$ qubit initial state $\ket{\psi}$, we have 
    \begin{equation}
        e^{iH\pi}\ket{\psi}\ket{0} = U \ket{\psi}\ket{k}
    \end{equation}
\end{lemma}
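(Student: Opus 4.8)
The plan is to realize the circuit $U = U_k\cdots U_1$ as a continuous-time quantum walk on a clock register, in the spirit of the Feynman-clock / quantum-walk universality of \cite{childs_quantum_2004}. I introduce an ancillary clock register of dimension $k+1$ with computational basis $\{\ket{0},\ket{1},\dots,\ket{k}\}$ (so $\lceil\log_2(k+1)\rceil$ qubits), and define
\begin{equation}
    H \;=\; \sum_{j=1}^{k} a_j\left( U_j\otimes\ketbra{j}{j-1} \;+\; U_j^{\dagger}\otimes\ketbra{j-1}{j}\right),
\end{equation}
where $a_1,\dots,a_k$ are positive real couplings fixed below. This $H$ is manifestly Hermitian, and each summand acts only on the (at most two) qubits touched by $U_j$ together with the clock register, so $H$ is $O(\log k)=O(\log n)$-local; if a strictly geometrically local Hamiltonian is wanted one inserts the standard clock/perturbation gadgets, which does not affect the argument.

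The key structural point is that $H$ is a "dressed" free-hopping chain. Let $h = \sum_{j=1}^{k} a_j(\ketbra{j}{j-1}+\ketbra{j-1}{j})$ be the bare nearest-neighbour chain on the clock alone, and let $S = \sum_{j=0}^{k}(U_j\cdots U_1)\otimes\ketbra{j}{j}$ be the "staircase" unitary, with the empty product equal to $\id$ for $j=0$. A one-line computation gives $S(\id\otimes\ketbra{j}{j-1})S^{\dagger} = (U_j\cdots U_1)(U_{j-1}\cdots U_1)^{\dagger}\otimes\ketbra{j}{j-1} = U_j\otimes\ketbra{j}{j-1}$, hence $H = S(\id\otimes h)S^{\dagger}$ and therefore $e^{iH\pi} = S(\id\otimes e^{ih\pi})S^{\dagger}$. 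Since $S^{\dagger}\ket{\psi}\ket{0} = \ket{\psi}\ket{0}$ and $S\ket{\psi}\ket{k} = U\ket{\psi}\ket{k}$, the lemma reduces to choosing the couplings so that the bare chain transports the clock from site $0$ to site $k$ in time $\pi$: if $e^{ih\pi}\ket{0} = \gamma\ket{k}$ for a scalar $\gamma$, then $e^{iH\pi}\ket{\psi}\ket{0} = S(\id\otimes e^{ih\pi})\ket{\psi}\ket{0} = \gamma\, S\ket{\psi}\ket{k} = \gamma\, U\ket{\psi}\ket{k}$, for every input $\ket{\psi}$ simultaneously.

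For the bare chain I would invoke the perfect-state-transfer spin chain of Krawtchouk type: taking $a_j = \tfrac12\sqrt{j(k-j+1)}$ makes $h$ coincide with the operator $J_x$ in the spin-$k/2$ irreducible representation, under the identification of clock site $j$ with the $J_z$-weight vector of eigenvalue $j-k/2$. Then $e^{ih\pi}=e^{i\pi J_x}$ conjugates $J_z$ to $-J_z$, so it maps the lowest-weight state (site $0$) to the highest-weight state (site $k$) up to a global phase $\gamma$ of modulus one that depends only on $k$ and is read off from the Wigner $d$-matrix; this is exact, not approximate. Combining with the previous paragraph yields $e^{iH\pi}\ket{\psi}\ket{0} = \gamma\, U\ket{\psi}\ket{k}$. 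To obtain the clean statement one then neutralizes $\gamma$: either pad the circuit with one or two trivial identity gates (adjusting $k$ and the clock dimension) so that the accumulated phase becomes $1$, or subtract an appropriate multiple of $\id\otimes\Pi$ from $H$, with $\Pi$ a projector that leaves the relevant invariant subspace untouched.

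I expect the main obstacle to be exactness together with phase bookkeeping. A uniform-coupling chain only achieves approximate transfer, so one genuinely needs the Krawtchouk coupling profile and the resulting $\mathrm{SU}(2)$ picture to get an on-the-nose identity in time exactly $\pi$; after that, one must track the $k$-dependent global phase $\gamma$ carefully and argue it can be removed without spoiling Hermiticity, locality, or the fact that the conclusion holds for an arbitrary initial state $\ket{\psi}$. The remaining ingredients — Hermiticity, the conjugation identity $H=S(\id\otimes h)S^{\dagger}$, and the locality/sparsity bookkeeping for $H$ — are routine.
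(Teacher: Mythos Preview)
The paper does not supply its own proof of this lemma; it is quoted from \cite{molteni_exponential_2024}, which in turn rests on the quantum-walk universality construction of \cite{childs_quantum_2004}. Your argument --- a Feynman-clock Hamiltonian conjugated via the staircase unitary $S$ to a bare hopping chain, with Krawtchouk couplings $a_j=\tfrac12\sqrt{j(k-j+1)}$ so that $h=J_x$ and $e^{i\pi J_x}$ effects perfect end-to-end transfer --- is precisely the construction underlying those references, and is correct.

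One minor remark on locality: with a \emph{binary} clock the hop $\ketbra{j}{j-1}$ acts on all $\lceil\log_2(k+1)\rceil$ clock qubits, so your $H$ is $O(\log k)$-local rather than $O(1)$-local. The standard fix is not perturbative gadgets but a \emph{unary} clock (the original Feynman encoding), where $\ketbra{j}{j-1}$ flips two adjacent clock bits and each term of $H$ is genuinely $O(1)$-local. For the way the paper uses the lemma --- efficient Trotterization of $H(x,\alpha)$ --- either encoding suffices, so this does not affect the downstream argument.
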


We choose a promise $\mathsf{BQP}$ language $\mathcal{L}$, solved by $U \in \mathcal{U}(2^n)$, meaning that the sign of $\Tr[O U \ketbra{x}{x} U^\dagger]$ decides $\mathcal{L}$, where $O$ is some simple observable. We define the $V(x)$ that prepares $\ket{x}$ from $\ket{0}$. 
We apply \Cref{lem:Hhard} and get $H(x)$ such that 
\begin{equation}
    e^{iH(x)\pi}\ket{0}\ket{0} = U V(x) \ket{0}\ket{k} = U\ket{x}\ket{k}
\end{equation}
Choosing the observable $O'=O\otimes I$, we define the concept class as in \Cref{def:CCte} for the hamiltonian $H(x,\alpha)$ with added parametrized Pauli terms $P_l$ with unknown parameters $\alpha_l$. 
\begin{equation}
    H(x,\alpha) = H(x) + \sum \alpha_l P_l
\end{equation}

The concept class contains at least one concept that is BQP-hard, for $\alpha=0$. 
Therefore, using Lemma 2 in \cite{molteni_exponential_2024}, no classical learner can learn $c_0$ (under specified complexity assumptions). 
This yields the following theorem.
\begin{theorem}
\label{thm:PQChard}
If $\mathsf{BQP} \not\subset \mathsf{P/poly}$ then the \ccb~ concept class \Cref{def:CCte} is not classically efficiently PAC learnable as in \Cref{def:pac}.
\end{theorem}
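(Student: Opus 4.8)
The plan is to show that the \ccb~concept class of \Cref{def:CCte} (indeed already the restricted family $\mathcal{C}^{H}_{\log}$) contains a single concept whose evaluation is $\mathsf{BQP}$-hard, and then to invoke the generic learning-to-computation reduction of \cite{molteni_exponential_2024} (their Lemma~2): if a concept class is classically efficiently PAC learnable and contains a concept that is classically hard to evaluate, then $\mathsf{BQP}\subset\mathsf{P/poly}$. The theorem is the contrapositive, instantiated with the concept we build.

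To construct the hard concept, I would start from a $\mathsf{BQP}$-hard promise language $\mathcal{L}$ decided by a uniform family of $k=\poly(n)$-gate circuits $U$ on $n$ qubits together with a simple observable $O$, amplified so that $\operatorname{sign}(\Tr[O\,U\ketbra{x}{x}U^\dagger])$ decides membership of $x$ with a constant margin $\epsilon_0$ (any real number within $\epsilon_0$ of that expectation still decides $\mathcal{L}$). Writing $V(x)$ for the $X$-layer preparing $\ket{x}$ from $\ket{0}$, I would apply \Cref{lem:Hhard} to $U V(x)$ to obtain a local Hamiltonian $H(x)$ on $n+\lceil\log k\rceil$ qubits with $e^{i\pi H(x)}\ket{0}\ket{0}=U\ket{x}\ket{k}$, promote it to $H(x,\alpha)=H(x)+\sum_{l}\alpha_l P_l$ by inserting $\log n$ fixed Pauli terms $P_l$ at fixed locations (matching the parameter count of $\mathcal{C}^{H}_{\log}$), fix the evolution time $\tau=\pi$, and take the observable $O'=O\otimes I$ acting trivially on the clock register --- possibly after prepending one Hadamard-test ancilla so that the $\mathsf{BQP}$ expectation appears in the off-diagonal form $\bra{1}U_n^\dagger O' U_n\ket{0}$ demanded by \Cref{def:CCte}. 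I would then check this data is a legitimate instance of \Cref{def:CCte}: $H_n$ is local with a $\poly(n)$-size description, inputs are bitstrings $x\in\{0,1\}^{s_n}$, there are $d_n=\log n$ continuous parameters, and the $\alpha=0$ member $c_0$ has the property that $\operatorname{sign}(\mathrm{Re}\,c_0(x))$ decides $x\in\mathcal{L}$ --- so evaluating $c_0$ classically is $\mathsf{BQP}$-hard.

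The final step is to feed $c_0$ into Lemma~2 of \cite{molteni_exponential_2024}: a classical $\poly$-time PAC learner for a class containing $c_0$, together with the $\poly$-size labelled sample and random bits it consumes (which exist for information-theoretic reasons and cost only $\poly(n)$ bits of advice, even though no efficient classical algorithm can produce the labels), assembles into a $\mathsf{P/poly}$ family deciding $\mathcal{L}$, where one takes $\epsilon<\epsilon_0$ so that $\epsilon$-closeness forces sign-agreement with $c_0$, and $\delta$ small so that the hypothesis is correct on all but a tiny fraction of inputs. The main obstacle I anticipate lives entirely in this reduction and is twofold. First, one must do the register bookkeeping carefully --- the clock register from \Cref{lem:Hhard} and any Hadamard-test ancilla --- so that $c_0$ is \emph{literally} a member of the class in the exact form of \Cref{def:CCte}, with $O'$ still ``simple'' and the amplified margin $\epsilon_0$ preserved through the reshaping. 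Second, and more delicate, Lemma~2 must turn a \emph{distributional} (average-case, $1-\delta$ of inputs) learning guarantee into \emph{worst-case} $\mathsf{P/poly}$ circuits for $\mathcal{L}$; this requires the underlying $\mathsf{BQP}$-hard problem to be robust --- e.g.\ worst-case-to-average-case (random self-)reducible --- so that a circuit correct on a $1-\delta$ fraction can be corrected to one correct everywhere, and one should verify that the concrete $\mathcal{L}$ chosen here supports this, since it is the step where the complexity assumption $\mathsf{BQP}\not\subset\mathsf{P/poly}$ actually does its work.
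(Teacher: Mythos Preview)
Your proposal is correct and follows essentially the same route as the paper: choose a $\mathsf{BQP}$-hard promise language decided by $U$ and a simple observable, precompose with the state-preparation $V(x)$, apply \Cref{lem:Hhard} to obtain a local $H(x)$ with $e^{i\pi H(x)}\ket{0}\ket{0}=U\ket{x}\ket{k}$, perturb to $H(x,\alpha)=H(x)+\sum_l \alpha_l P_l$, and conclude via Lemma~2 of \cite{molteni_exponential_2024} that the $\alpha=0$ concept forces $\mathsf{BQP}\subset\mathsf{P/poly}$ if the class were classically efficiently PAC learnable. Your extra care about the off-diagonal $\bra{1}\cdots\ket{0}$ form in \Cref{def:CCte} and about the average-case/worst-case gap in the learning-to-computation reduction goes beyond what the paper spells out; the paper simply invokes Lemma~2 of \cite{molteni_exponential_2024} as a black box and does not discuss either point explicitly.
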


\section{Beyond log-many parameters}

\subsection{Exponentially large spectrum}

So far we have considered concept classes where the number $d_n$ of $\alpha$ parameters scales logarithmically with $n$, making them restricted.
When instead of $d_n$ scales polynomially, the cardinality of the spectrum $\lvert\mathcal{L}\rvert$ scales exponentially.
The proposed algorithm is no longer efficient in general, as it would require performing regression in an exponentially large space.
In addition, for complexity-theoretic reasons, devising a scheme that provably efficiently PAC-learns any setting with an exponentially large spectrum is not possible.

Specifically, in \cite{arunachalam_quantum_2019} it was proven that the learning of shallow classical circuits, even in the quantum PAC model (strictly stronger model than ours as the data in terms of a purification rather than samples from a distribution) is impossible under common complexity-theoretic assumptions, e.g. that ring-learning with errors cannot be done in polynomial time on a quantum computer.
Polynomially-sized PQCs with polynomially many parameters can encode shallow classical circuits, and thus their efficient learnability would also imply the unlikely efficient quantum algorithms for learning with errors.

However, giving up on provable bounds, we can still devise a heuristic algorithm making use of the proposed feature map.

\subsection{Kernel approach}
We propose a kernel method approach as a heuristic.
Consider the \cca~concept class, for which we have $c_{\alpha}(x) = \sum_{l\in\mathcal{L}}b_le^{i\pi\alpha\cdot l}$, or equivalently, for any $l\in\mathcal{L} = [-L,L]^d$:
\begin{equation}
    b_l(x) = \int_{(0,1)^d} c_{\alpha}(x) e^{-i\pi \alpha \cdot l} d\alpha
\end{equation}
We define the kernel as,
\begin{equation}
    k(x,x') = b(x) \cdot b(x')\,,
\end{equation}
and make use of the quantum algorithm proposed in \Cref{sec:freqsamp} to estimate kernel values.
We provide more details on the circuit we use to do so efficiently in \Cref{app:kernel}.
Based on this, we build the $T\times T$ Gram matrix with $O(T^2)$ evaluations on a quantum computer. 
Finally, we can perform traditional kernel methods on a classical quantum computer, for example, kernel ridge regression.

We discuss the limitations of this kernel approach and describe conditions under which they can be mitigated.
In general, the dimension of the feature map of the resulting kernel is exponentially large.
Therefore, the generalization performance is not guaranteed unless we have an exponential number of training samples.
In fact, quantum kernels famously suffer from problems of exponential concentration \cite{thanasilp_exponential_2024}. 
A number of known factors yield exponential concentration, such as the expressivity of the data embedding, or global measurements. 
However, looking at the circuit producing the kernel evaluation in \Cref{fig:algokernel}, we argue that the kernel we propose is not immediately concerned by any of the known causes of exponential concentration in quantum kernels.

In fact, we prove that if the spectrum is sparse, with polynomially large support, the kernel ridge regression yields an efficient PAC learning algorithm in \Cref{app:kernel}.
It is possible to construct an artificial case where we get a frequency support that is only polynomially large for an a priori exponentially large spectrum.
Consider the concept class \Cref{def:parcircCC} with $d\in O(\poly(n))$, but such that most $\alpha$ parameters cancel each other out by construction, with for example $R_z(\alpha_s)YR_z(\alpha_s)$.
Suppose that only logarithmically many $\alpha$ survive these cancellations.
While the spectrum is a priori exponentially large, it is in reality only polynomially large.
The proposed kernel approach could efficiently learn this concept class, while no classical learner could unless $\mathsf{BQP} \subset \mathsf{P/poly}$.

\subsection{Properties of the feature map}
By Mercer's theorem, any kernel has a feature map associated with it. In the kernel approach that we propose, the feature map is simply $x\rightarrow b(x)$.
That is, the feature map is directly the complex vector that represents the Fourier coefficients with respect to $\alpha$ of the function described by a bitstring $x$.

In this subsection, we argue that for the \ccb~concept class, the feature map associated with the kernel is in some sense equivalent for any circuit approximating the function. 
In particular, this means that as long as a given precision level is reached, the feature map is invariant with respect to the  Hamiltonian time evolution technique.
For simplicity, we propose to use Trotterization, where the depth scales polynomially with the error, but, for more optimal schemes \cite{low_hamiltonian_2019, wiebe_hamiltonian_2012}, the depth scales logarithmically with the error. 
We explain this in more detail below.

Given a bounded function $f:[0,1]^d \rightarrow \mathbb{C}$, it is always possible to define its Fourier coefficients as $b_l = \int_{(0,1)^d} f(\alpha) e^{i2\pi l \cdot \alpha} d\alpha$ for any $l\in\mathbb{Z}^d$. 
That is, any such function can be represented as a complex vector in an infinite-dimensional space.
We have also seen that when a parametrized circuit is Pauli-encoded, it has a finite Fourier representation, that is, $b$ is a finite-dimensional vector. 
We argue that if a Pauli-encoded parametrized quantum circuit approximates a function, its finite-dimensional vector approximates the infinite-dimensional one in the 2-norm.
The Hausdorff–Young inequality \cite{christ_sharpened_2014} states that, for any two integer $p\in[1,2]$ and $p'$ such that $\frac{1}{p}+\frac{1}{p'}=1$, we have,
\begin{equation}
    \Big(\sum_{l\in\mathbb{Z}^d}\big|b_l\big|^{p'}\Big)^{1/p'}\leq\Big(\int_{(0,1)^l}|f(\alpha)|^p\,d\alpha\Big)^{1/p}\,.
\end{equation}

Consider an exact time evolution yielding a concept $c_\alpha$ (see \Cref{def:CCte}).
It is approximated up to $\epsilon/2$ by a first circuit yielding the functions $c'$ and another circuit yielding the function $c''$.
We note their Fourier decomposition $b'$ and $b''$ respectively. 
Then, using the Hausdorff–Young inequality with $p=p'=2$ we get an upper bound of the 2-norm of the Fourier decomposition of the difference $c'-c''$ as its infinite norm, itself bounded by $\epsilon$, as follows,
\begin{equation}
    \lVert b'_l - b''_l \rVert_2\leq\Big(\int_{(0,1)^d}|(c'-c'')(\alpha)|^2\,d\alpha\Big)^{1/2} \leq \epsilon\,.
\end{equation}
This concludes that feature maps approximating the same quantum function are $\epsilon$-close in the 2-norm metric space.
This analysis implies that the learning and generalization properties of the kernel will not in any significant way depend on which Hamiltonian simulation technique is used, i.e. which hypothesis function is chosen for the quantum learning algorithm.

\section{Discussion}
\label{sec:discussion}
\subsection{Cardinality of the Concept class}

In the first work demonstrating quantum-classical learning separations for quantum functions the concept classes were polynomially sized, which meant a brute-force algorithm checking all hypotheses is efficient \cite{gyurik_exponential_2023}. 
In \cite{molteni_exponential_2024,yamasaki_advantage_2023} first examples of learning with an exponentially-sized concept class were introduced. 

The concept classes we study here, the \cca~and the \ccb~concept classes, are indexed by continuous parameters, and each parameter setting mathematically specifies a different function (up to periodicity concerns), and thus the class is a continuum. 
However, we note that we are interested in approximations, that is, finding functions which agree with the true function on a $1-\delta$ fraction of the inputs when sampled from the input distribution. 
The question then arises of what the effective size of this function family is.
Precisely, we are interested in the hypothesis family $\mathcal{H}$, such that for each $c \in \mathcal{C}$ there exists $h_c \in H$ such that $h_c$ is $\epsilon$-close to $c$ in the PAC sense as in \Cref{def:pac}. $H$ can depend on $\epsilon$ and the input distribution $\mathcal{D}$.

If $\mathcal{H}$ is polynomially sized in $n$ and its elements can be efficiently constructed and evaluated, then this would allow for a brute-force type algorithm for our learning task in the log-parameter case
We note that demanding that $\mathcal{H}$ is a subset of $\mathcal{C}$ we obtain the notion of $\epsilon$-packing of the concept class, which is closely related to covering numbers, and both are quite well understood as quantities governing, for example, generalization bounds~\cite{caro_generalization_2022}.

For our concept class \cca, to the best of our effort to obtain a tight bound (see \Cref{app:card}) we can find a super-polynomial grid of functions which attains epsilon-packing, upper bounding the smallest effective hypothesis size to $O(n^{\log\log(n)})$. 
We conjecture that this log log scaling in the exponent may be an artefact of the bounding method and that, in fact, the concept classes with $d$ in $O(\log(n))$ allow for a polynomial-sized effective class and a more direct learning algorithm. 
However the key question of whether this is true and whether these hypotheses can be efficiently found and evaluated remains open.

Therefore, the method we provide is not brute force, but the arguments above raise the question of whether a brute force method could also be possible for our learning task.
Either way, we emphasize that a learning separation persists.

\subsection{Conclusions}
\label{sec:conclusion}
In this work, we have proposed an algorithm that efficiently prepares a state representing the Fourier decomposition of some quantum functions.
We have defined a concept class based on parametrized quantum circuits that can be efficiently PAC learned using a quantum computer and standard regression techniques.
As this first concept class is not physically relevant, we built on it and propose a concept class based on a Hamiltonian time evolution and show that it is also quantum-efficient PAC learnable.
We do so by applying the previous result to the Trotterized time evolution, but later discuss that any quantum simulation technique would yield similar performance.
We also show that both classes cannot be PAC-learned efficiently by any classical algorithm unless $\mathsf{BQP} \subset \mathsf{P/poly}$, effectively proving a learning separation.
Both concept classes have polynomially sized feature space, which yields a weaker learning separation in comparison with brute force approaches \cite{gyurik_exponential_2023}. 
We finally discuss regimes in which a priori exponentially large feature space could remain efficiently PAC-learnable and avoid exponential concentration issues \cite{thanasilp_exponential_2024}.

\section{Acknowledgments}
\label{sec:acknowledgments}
The authors are grateful to Matthias Caro and Sofiene Jerbi for numerous discussions.
AB and MG are supported by CERN through the CERN Quantum Technology Initiative. 
AB is supported by the Quantum Computing for Earth Observation (QC4EO) initiative of ESA $\phi$-lab, partially funded under contract 4000135723/21/I-DT-lr, in the FutureEO program. 
This work was supported by the Dutch National Growth Fund (NGF), as part of the Quantum Delta NL programme.
VD was supported by ERC CoG BeMAIQuantum (Grant No. 101124342). 
This work was also partially supported by the Dutch Research Council (NWO/OCW), as part of the Quantum Software Consortium programme (Project No. 024.003.03).
MYR is part of the project Divide \& Quantum (with project number 1389.20.241) of the research programme NWA-ORC which is (partly) financed by the Dutch Research Council (NWO)
Views and opinions expressed are however, those of the author(s) only and do not necessarily reflect those of the European Union, the European Commission, or the European Space Agency, and neither can they be held responsible for them.

\bibliographystyle{plain}
\bibliography{references.bib}

\appendix
\onecolumngrid
\section{LASSO regression}
\label{app:lasso} 
\label{subsec:lasso}
LASSO (Least Absolute Shrinkage and Selection Operator) regression is a linear regression method \cite{mohri_foundations_2018} with an added regularization on the 1-norm of the weight vector to the loss function. This enforces sparsity in the estimated coefficients. Given a input matrix $X \in \mathbb{R}^{m \times T}$ and label vector $y \in \mathbb{R}^{m}$, the LASSO estimator solves a mean square error minimization with a constraint of the norm 1 of the weight vector.
\[
\hat{w} = \arg\min_{w} \| y - Xw \|_2^2 \text{~subject to~}\|w\|_1\leq \Lambda_1,
\]  
where $\Lambda_1 > 0$ controls the trade-off between sparsity and model fit. This property makes LASSO particularly useful in high-dimensional settings where many predictors may be irrelevant. In \Cref{app:lasso} we prove the following theorem about the robustness of the LASSO regressor in the presence of perturbations.

\begin{theorem}
Consider a dataset of size \( T \), denoted \( \{(\hat{b}_t,\hat{y}_t)\}_{1 \leq t \leq T} \), generated from a linear model with an unknown weight vector \( w \in \mathbb{R}^m \), and affected by perturbations, as follows:
\begin{align}
    b_t \cdot w &= y_t, \\
    \hat{b}_t &= b_t + \eta_{b,t}, \\
    \hat{y}_t &= y_t + \eta_{y,t},
\end{align}
where \( \norm{\eta_{b,t}}_\infty \) and \( \norm{\eta_{y,t}}_\infty \) are bounded by \( \epsilon_b \) and \( \epsilon_y \), respectively. Here, \( m \) denotes the dimension of $b$ and we know $r_\infty$ such that $\norm{b}_\infty \leq r_{\infty}$. Running the LASSO algorithm with the constraint that $\norm{w}_1 \leq \Lambda_1$ on this dataset, with probability at least \( 1 - \delta \), yields a model \( h^* \) such that the true risk (training error + generalisation error) is at most \( \epsilon \), provided that
\begin{align}
T &\geq \frac{\left(2 \Lambda_1 r_\infty \right)^4 \sqrt{ 2 \log\left( \frac{2m}{\delta} \right)}}{\epsilon^2}.\\
\Lambda_1 \epsilon_b &\leq 0.2 \epsilon\\
\epsilon_y &\leq 0.5 \epsilon
\end{align}
\label{thm:lasso}
\end{theorem}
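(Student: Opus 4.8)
The plan is to bound the true risk of $h^*$ by the standard empirical-risk-minimization decomposition — a uniform-convergence term over the hypothesis class $\mathcal{H}=\{\hat b\mapsto\hat b\cdot w:\|w\|_1\leq\Lambda_1\}$ plus the empirical risk attained by the LASSO solution $\hat w$ — and to absorb the feature and label perturbations by repeated use of Hölder's inequality against the $\ell_1$-ball constraint. Throughout, I use that the generating weight $w$ is itself feasible, $\|w\|_1\leq\Lambda_1$ (as in our application, where $\Lambda_1=\|w\|_1=m$). First I would establish the generalization bound: since $\|\hat b_t\|_\infty\leq r_\infty+\epsilon_b$, Hölder puts all predictions $\hat b\cdot w$ and all (perturbed) labels into an interval of radius $O(\Lambda_1 r_\infty)$, on which the squared loss is bounded by $O((\Lambda_1 r_\infty)^2)$ and Lipschitz with constant $O(\Lambda_1 r_\infty)$. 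The empirical Rademacher complexity of $\mathcal{H}$ is controlled through the $2m$ vertices $\pm\Lambda_1 e_i$ of the scaled $\ell_1$ ball (Massart's finite-class lemma); combined with the Talagrand contraction lemma and a McDiarmid/Hoeffding tail bound, this gives that, with probability at least $1-\delta$, simultaneously for every $w$ in the $\ell_1$ ball,
\begin{equation}
\Big|\tfrac{1}{T}\sum_{t}|\hat b_t\cdot w-\hat y_t|^2-\mathbb{E}\,|\hat b\cdot w-\hat y|^2\Big|\;\leq\;\Delta,\qquad\Delta=O\!\Big(\tfrac{(\Lambda_1 r_\infty)^2\,\mathrm{polylog}(m/\delta)}{\sqrt{T}}\Big),
\end{equation}
and the displayed value of $T$ forces $\Delta$ below a prescribed fraction of $\epsilon$.

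Next I would observe that the true weight $w$ has a tiny empirical risk on the \emph{perturbed} dataset: since $b_t\cdot w=y_t$, the residual of $w$ on the $t$-th perturbed sample collapses to $\hat b_t\cdot w-\hat y_t=\eta_{b,t}\cdot w-\eta_{y,t}$, whence $|\hat b_t\cdot w-\hat y_t|\leq\|\eta_{b,t}\|_\infty\|w\|_1+|\eta_{y,t}|\leq\Lambda_1\epsilon_b+\epsilon_y$. Hence the perturbed empirical risk of $w$ is at most $(\Lambda_1\epsilon_b+\epsilon_y)^2$; since $\hat w$ minimizes this risk over the same feasible set it inherits the bound, and feeding $\hat w$ into the uniform-convergence inequality bounds $\mathbb{E}\,|\hat b\cdot\hat w-\hat y|^2$ — the ``training error $+$ generalisation error'' of the statement — by $(\Lambda_1\epsilon_b+\epsilon_y)^2+\Delta$. (If one wants instead the clean risk of the deployed predictor $h^*(x)=\hat w\cdot\hat b(x)$ against $c(x)=b(x)\cdot w$, a final Hölder step — $\|\eta_b(x)\|_\infty\leq\epsilon_b$, $|\eta_y|\leq\epsilon_y$, $\|\hat w\|_1\leq\Lambda_1$ — plus the triangle inequality in $L^2(D)$ adds at most $O(\Lambda_1\epsilon_b+\epsilon_y)$ to the root of this quantity.) Substituting the three hypotheses — $\Lambda_1\epsilon_b\leq 0.2\epsilon$, $\epsilon_y\leq 0.5\epsilon$, and the stated $T$, which drives $\Delta$ below the remaining budget — then collapses the total to at most $\epsilon$.

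The conceptual ingredients here are textbook statistical learning theory; the real work, and the main obstacle, is the constant bookkeeping required to land exactly on the stated thresholds. One must evaluate all boundedness and Lipschitz constants on the slightly enlarged domain $\|\hat b\|_\infty\leq r_\infty+\epsilon_b$ rather than $r_\infty$ and absorb the slack into the factor $2$ in $(2\Lambda_1 r_\infty)$; one must check that the single high-probability event of the uniform-convergence step holds uniformly over the whole $\ell_1$ ball, so that it applies to the data-dependent minimizer $\hat w$; and one must allocate the $\epsilon$-budget between the generalization term $\Delta$ and the two perturbation contributions so that the total is at most $\epsilon$ with precisely the constants $0.2$ and $0.5$ — an allocation which, together with the exact concentration inequality invoked, is also what pins down the precise powers of $\log(2m/\delta)$ and of $\epsilon^{-1}$ in the bound on $T$. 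A minor additional point is that $b$ and $w$ are complex-valued in the intended application, handled by running the argument on real and imaginary parts separately at the cost of harmless constant factors.
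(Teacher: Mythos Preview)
Your proposal is correct and follows essentially the same route as the paper: invoke the standard $\ell_1$-ball generalization bound for linear predictors (the paper cites it as a known LASSO result, while you sketch its Rademacher/Massart derivation), observe that the true $w$ is feasible so its perturbed empirical residuals are at most $\Lambda_1\epsilon_b+\epsilon_y$ via H\"older, hence the LASSO minimizer inherits empirical risk $\leq(\Lambda_1\epsilon_b+\epsilon_y)^2$, and then calibrate the constants. The only cosmetic difference is that the paper computes the absolute-error bound $M$ explicitly as $\Lambda_1(2r_\infty+\epsilon_b)+\epsilon_y$ before substituting into the generalization inequality, whereas you package this as the Lipschitz/boundedness constant in the contraction step; the resulting arithmetic and the $0.2/0.5$ budget allocation are the same.
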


We prove \Cref{thm:lasso} below. First, we state the well-known generalization bound for the LASSO algorithm. Then we prove a lemma that bounds the empirical risk, and then we combine these two to bound the true risk.
\begin{theorem}
\label{thm:genlasso}
 Let \( \mathcal{X} \subseteq \mathbb{R}^m \) and 
\[
\mathcal{H} = \left\{ b \in \mathcal{X} \mapsto w \cdot b : \|w\|_1 \leq \Lambda_1 \right\}.
\]
Let \( S = ((b_1, y_1), \ldots, (b_T, y_T)) \in (\mathcal{X} \times \mathcal{Y})^T \). Let \( \mathcal{D} \) denote a distribution over \( \mathcal{X} \times \mathcal{Y} \) according to which the training data \( S \) is drawn. Assume that there exists \( r_\infty > 0 \) such that for all \( b \in \mathcal{X} \), \( \|b\|_\infty \leq r_\infty \), and \( M > 0 \) such that \( |h(b) - y| \leq M \) for all \( (b, y) \in \mathcal{X} \times \mathcal{Y} \). Then, for any \( \delta > 0 \), with probability at least \( 1 - \delta \), each of the following inequalities holds for all \( h \in \mathcal{H} \):
\[
\mathcal{R}(h) \leq \hat{\mathcal{R}}_S(h) + 2 r_\infty \Lambda_1 M \sqrt{\frac{2 \log(2m)}{T}} + M^2 \sqrt{\frac{\log(\delta^{-1})}{2T}},
\]
where \( \mathcal{R}(h) = \mathbb{E}_{(b, y) \sim \mathcal{D}} \left[ |h(b) - y|^2 \right] \) is the prediction error for the hypothesis \( h \), and \( \hat{\mathcal{R}}_S(h) \) is the training error of \( h \) on the training data \( S \). 
\end{theorem}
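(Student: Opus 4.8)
The plan is to derive this bound along the standard Rademacher-complexity route for uniformly bounded losses, essentially reproducing the textbook treatment of $L_1$-regularized regression \cite{mohri_foundations_2018}. Write $\ell_h(b,y) = |h(b)-y|^2$ for the squared loss on a point, which by the standing hypothesis satisfies $0 \le \ell_h(b,y) \le M^2$ for every $h\in\mathcal{H}$ and $(b,y)\in\mathcal{X}\times\mathcal{Y}$, and set $\ell\circ\mathcal{H} = \{(b,y)\mapsto\ell_h(b,y): h\in\mathcal{H}\}$. The first step is the generic uniform-convergence bound: applying bounded differences (McDiarmid) to $\sup_{h\in\mathcal{H}}\big(\mathcal{R}(h)-\hat{\mathcal{R}}_S(h)\big)$ and then the symmetrization argument gives, with probability at least $1-\delta$ over the draw of $S$, that every $h\in\mathcal{H}$ obeys $\mathcal{R}(h)\le\hat{\mathcal{R}}_S(h) + 2\,\mathfrak{R}_T(\ell\circ\mathcal{H}) + M^2\sqrt{\log(1/\delta)/(2T)}$, where $\mathfrak{R}_T$ is the (expected) Rademacher complexity over $T$ samples.

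Second, I would peel the loss off the hypothesis class. On $[-M,M]$ the map $u\mapsto u^2$ is $2M$-Lipschitz, and replacing $h(b)$ by $h(b)-y$ is a sample-dependent affine shift that does not change the Rademacher complexity, since the terms $\sigma_i y_i$ factor out of the supremum over $h$ and vanish under $\mathbb{E}_\sigma$. Talagrand's contraction lemma then yields $\mathfrak{R}_T(\ell\circ\mathcal{H}) \le 2M\,\mathfrak{R}_T(\mathcal{H})$ (up to the version-dependent Lipschitz/contraction constant); this is precisely where the uniform bound $|h(b)-y|\le M$ is used.

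Third, I would bound $\mathfrak{R}_T(\mathcal{H})$ for the $\ell_1$-ball of linear predictors. By Hölder duality, $\sup_{\|w\|_1\le\Lambda_1}\langle w,v\rangle = \Lambda_1\|v\|_\infty$, so the empirical Rademacher complexity is $\widehat{\mathfrak{R}}_S(\mathcal{H}) = \tfrac{\Lambda_1}{T}\,\mathbb{E}_\sigma\big\|\sum_{i=1}^{T}\sigma_i b_i\big\|_\infty$, which equals the expected maximum of $\langle\pm e_j,\sum_i\sigma_i b_i\rangle$ over the $2m$ signed coordinate directions. Each coordinate vector $(\pm b_{i,j})_i\in\mathbb{R}^T$ has Euclidean norm at most $r_\infty\sqrt{T}$ because $\|b\|_\infty\le r_\infty$, so Massart's finite-class lemma gives $\mathbb{E}_\sigma\big\|\sum_i\sigma_i b_i\big\|_\infty \le r_\infty\sqrt{2T\log(2m)}$, hence $\widehat{\mathfrak{R}}_S(\mathcal{H}) \le \Lambda_1 r_\infty\sqrt{2\log(2m)/T}$, and the same bound survives taking $\mathbb{E}_S$. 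Substituting this back through the second and first steps assembles the claimed inequality.

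I do not expect a genuine obstacle: the statement is a routine composition of symmetrization, contraction, and the $\ell_1/\ell_\infty$ Rademacher bound, and could even be quoted directly from \cite{mohri_foundations_2018}. The only points requiring care are (i) verifying that the boundedness assumption $|h(b)-y|\le M$ is exactly what licenses the Lipschitz contraction of the squared loss, and (ii) tracking the numerical constants so that they match the stated coefficient $2 r_\infty\Lambda_1 M\sqrt{2\log(2m)/T}$ — the exact constant depends on whether one works with the population or empirical Rademacher complexity and on the precise form of the contraction lemma used, but in every version it is an absolute $O(1)$ factor that does not affect the downstream use of this theorem in \Cref{thm:lasso}.
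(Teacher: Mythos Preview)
Your proposal is correct and follows exactly the standard Rademacher-complexity argument for $\ell_1$-regularized regression. Note, however, that the paper does not actually prove this theorem: it is stated as ``the well-known generalization bound for the LASSO algorithm'' and taken directly from \cite{mohri_foundations_2018}, so your sketch simply reconstructs the textbook proof that the paper is quoting.
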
 
\begin{lemma}[Upper Bound on Empirical Risk under Bounded Perturbations]
\label{lem:emprisk}
Let \( h^*(x) = \mathbf{w}^* \cdot \hat{\mathbf{b}}(x) \) denote the hypothesis returned by the LASSO algorithm, trained on a dataset with perturbed features \( \hat{\mathbf{b}}(x) \) and noisy labels \( \hat{y} = y + \eta_y \). Suppose the true labeling function is linear, i.e., \( y = \mathbf{w} \cdot \mathbf{b}(x) \), for some weight vector \( \mathbf{w} \in \mathbb{R}^m \), and assume that \( \|\mathbf{w}\|_1 \leq \Lambda_1 \). Furthermore, assume that the unperturbed features satisfy \( \|\mathbf{b}(x)\|_\infty \leq r_\infty \), the perturbation on the features is bounded by \( \|\hat{\mathbf{b}}(x) - \mathbf{b}(x)\|_\infty \leq \epsilon_b \), and the additive noise on the labels is bounded as \( |\eta_y| \leq \epsilon_y \). If the LASSO optimization problem is solved approximately such that the empirical risk of \( h^* \) is within \( \epsilon_3 / 2 \) of the optimal empirical risk over all weight vectors with \( \ell_1 \)-norm at most \( \Lambda_1 \), then the empirical risk of the resulting hypothesis satisfies
\[
\hat{\mathcal{R}}_S(h^*) = \frac{1}{T} \sum_{t=1}^N \left( h^*(x_t) - y_t \right)^2 \leq \left( \Lambda_1 \epsilon_b + \epsilon_y \right)^2 + \frac{\epsilon_3}{2}.
\]
\end{lemma}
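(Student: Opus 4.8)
The plan is to treat this as a robustness statement for constrained empirical risk minimization: compare the LASSO output $h^*$ against the idealized linear hypothesis built from the \emph{true} weight vector, which is itself feasible for the LASSO program, and then transfer the bound via the near-optimality of $h^*$. Concretely, I would fix the benchmark hypothesis $h_{\mathbf w}(\mathbf b) = \mathbf w \cdot \mathbf b$, where $\mathbf w$ is the unknown true weight vector. Since $\norm{\mathbf w}_1 \le \Lambda_1$ by assumption, $h_{\mathbf w}$ lies in the feasible set $\{\, b \mapsto w \cdot b : \norm{w}_1 \le \Lambda_1 \,\}$ over which LASSO optimizes, so it is a legitimate point of comparison, and in particular $\min_{\norm{w}_1 \le \Lambda_1} \hat{\mathcal R}_S(h_w) \le \hat{\mathcal R}_S(h_{\mathbf w})$.

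Next I would bound the empirical (training) risk of $h_{\mathbf w}$ on the data the learner actually sees, namely the perturbed features $\hat{\mathbf b}(x_t) = \mathbf b(x_t) + \eta_{b,t}$ and the observed labels $\hat y_t = y_t + \eta_{y,t}$. Because the true labeling is linear, $y_t = \mathbf w \cdot \mathbf b(x_t)$, the residual telescopes:
\[
 h_{\mathbf w}(\hat{\mathbf b}(x_t)) - \hat y_t \;=\; \mathbf w\cdot\eta_{b,t} - \eta_{y,t}.
\]
By H\"older's inequality $\abs{\mathbf w\cdot\eta_{b,t}} \le \norm{\mathbf w}_1\,\norm{\eta_{b,t}}_\infty \le \Lambda_1\epsilon_b$, and $\abs{\eta_{y,t}} \le \epsilon_y$, so the triangle inequality gives $\abs{h_{\mathbf w}(\hat{\mathbf b}(x_t)) - \hat y_t} \le \Lambda_1\epsilon_b + \epsilon_y$ uniformly in $t$. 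Squaring and averaging over the $T$ training points yields $\hat{\mathcal R}_S(h_{\mathbf w}) \le (\Lambda_1\epsilon_b + \epsilon_y)^2$.

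Finally I would close the argument with the approximate-optimality assumption: $h^*$ has empirical risk within $\epsilon_3/2$ of the minimum over the feasible class, hence
\[
 \hat{\mathcal R}_S(h^*) \;\le\; \min_{\norm{w}_1 \le \Lambda_1}\hat{\mathcal R}_S(h_w) + \tfrac{\epsilon_3}{2} \;\le\; \hat{\mathcal R}_S(h_{\mathbf w}) + \tfrac{\epsilon_3}{2} \;\le\; (\Lambda_1\epsilon_b+\epsilon_y)^2 + \tfrac{\epsilon_3}{2},
\]
which is the claimed bound. No probabilistic tools are needed at this stage; randomness enters only later, when this empirical bound is combined with \Cref{thm:genlasso} to control the true risk in \Cref{thm:lasso}.

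I do not expect a serious obstacle here — the lemma is essentially a ``robustness of ERM'' statement. The one place that calls for care is bookkeeping: the benchmark $h_{\mathbf w}$ must be shown competitive on the \emph{perturbed-feature, noisy-label} sample that LASSO actually optimizes over, not on the unobserved ideal sample. It is exactly on this sample that the feature perturbation contributes the $\Lambda_1\epsilon_b$ term (through H\"older against the $1$-norm constraint) while the label noise contributes the $\epsilon_y$ term; keeping these two consistently on the same side of the residual, together with the fact that $h_{\mathbf w}$ is feasible, is what makes the clean estimate $(\Lambda_1\epsilon_b+\epsilon_y)^2$ come out rather than a messier sum of squares.
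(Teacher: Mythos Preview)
Your proposal is correct and follows essentially the same route as the paper: both arguments take the true weight vector $\mathbf w$ as a feasible benchmark for the constrained LASSO program, bound its per-sample residual on the perturbed data by $\Lambda_1\epsilon_b+\epsilon_y$ via H\"older and the triangle inequality, and then transfer the bound to $h^*$ using (approximate) optimality. Your write-up is in fact tidier than the paper's --- you compute the residual $\mathbf w\cdot\eta_{b,t}-\eta_{y,t}$ directly rather than passing through a worst-case index $t^*$, and you are more explicit about which labels (noisy versus clean) the empirical risk is taken against.
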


\begin{proof}
For any function \( g \), the training error is defined as
\[
\hat{\mathcal{R}}(g) = \frac{1}{T} \sum_{t=1}^T |g(x_t) - y_t|^2.
\]
Now let \( \mathbf{w}^* \) be the optimal vector that the LASSO algorithm outputs, i.e.,
\[
\mathbf{w}^* = \arg\min_{\|\mathbf{w}\|_1 \leq \Lambda_1} \left( \frac{1}{T} \sum_{t=1}^T \left| \mathbf{w} \cdot \hat{\mathbf{b}}(x_t) - y_t \right|^2 \right).
\]
It is clear that for any other \( \mathbf{w}' \neq \mathbf{w}^* \) with bounded norm (i.e. $\norm{\mathbf{w}'}_1\leq \Lambda_1$),
\[
\frac{1}{N} \sum_{i=1}^N \left| \mathbf{w}^* \cdot \hat{\mathbf{b}}(x_i) - y_i \right|^2 \leq \frac{1}{N} \sum_{i=1}^N \left| \mathbf{w}' \cdot \hat{\mathbf{b}}(x_i) - y_i \right|^2.
\]
Let \( \mathbf{w} \) be the true weight vector in the true labeling function \( h(x) = \mathbf{w} \cdot \mathbf{b}(x) \). Using this, we can bound the training error for the function \( h'(x) = \mathbf{w}' \cdot \hat{\mathbf{b}}(x) \). Let
\[
t^* = \arg\max_{0 \leq t \leq T} |\mathbf{w} \cdot \hat{\mathbf{b}}(x_t) - y_t|^2
\]
be the index of the training data point that maximizes the loss. Then:
\begin{align*}
\hat{\mathcal{R}}(h') &= \frac{1}{T} \sum_{t=1}^T |\mathbf{w}' \cdot \hat{\mathbf{b}}(x_t) - y_t|^2 \\
&\leq \frac{1}{T} \sum_{t=1}^T |\mathbf{w} \cdot \hat{\mathbf{b}}(x_t) - y_t|^2 \\
&\leq |\mathbf{w} \cdot \hat{\mathbf{b}}(x_{t^*}) - y_{t^*}|^2 \\
&\leq \left( |\mathbf{w} \cdot \hat{\mathbf{b}}(x_{t^*}) - \mathbf{w} \cdot \mathbf{b}(x_{t^*})| + |\mathbf{w} \cdot \mathbf{b}(x_{t^*}) - y_{t^*}| \right)^2 \\
&\leq (\Lambda_1 \epsilon_b + \epsilon_y)^2.
\end{align*}
Now consider the function \( \hat{h}(x) = \mathbf{\hat{w}} \cdot \hat{\mathbf{b}}(x) \), where \( \mathbf{\hat{w}} \) is obtained by minimizing the training error such that its empirical risk is at most \( \epsilon_3 / 2 \) worse than the minimum. That is, we allow for a suboptimal solution. Then the empirical risk of \( \hat{h} \) satisfies:
\[
\hat{\mathcal{R}}(\hat{h}) \leq (\Lambda_1 \epsilon_b + \epsilon_y)^2 + \frac{\epsilon_3}{2}.
\]
\end{proof}
The proof of \Cref{thm:genlasso} follows from the prediction error of the LASSO algorithm and the bound on empirical risk derived in \Cref{lem:emprisk}. First, we address \(M\), which is the upper bound on the absolute prediction error. In the noisy case where LASSO receives \(\hat{b}\) as input and the labels are also noisy, we have:
\[
|\hat{h}(\hat{b}) - \hat{y}| \leq M.
\]
Now:
\begin{align*}
|\hat{h}(\hat{b}) - \hat{y}| &= |\hat{b} \cdot \hat{w} - \hat{y}| \\
&= |\hat{b} \cdot \hat{w} - y - \eta_y| \\
&\leq |\hat{b} \cdot \hat{w}| + |y + \eta_y| \\
&\leq \|\hat{b}\|_\infty \|w\|_1 + |y| + |\eta_y|,
\end{align*}
where the last inequality follows from Hölder's inequality. The error on the labels is bounded, so \( |\eta_y| \leq \epsilon_y \), and \( \|w\|_1 \leq \Lambda_1 \). Since the true label is \( y = b \cdot w \), and under the assumptions \( \|b\|_\infty \leq r_\infty \) and \( \|w\|_1 \leq \Lambda_1 \), we get \( |y| \leq \|b\|_\infty \|w\|_1 \leq r_\infty \Lambda_1 \). Similarly, since \( \|\hat{b}\|_\infty \leq r_\infty + \epsilon_b \), and assuming \( \|\hat{w}\|_1 \leq \Lambda_1 \), we obtain the bound:
\[
M := \Lambda_1(2r_\infty + \epsilon_b) + \epsilon_y.
\]
Using this lemma, we can rewrite the generalization bound as:
\[
\mathcal{R}(\hat{h}) \leq (\Lambda_1 \epsilon_b + \epsilon_y)^2 + \frac{\epsilon_3}{2} + 2 r_\infty \Lambda_1 M \sqrt{\frac{2 \log(2m)}{T}} + M^2 \sqrt{\frac{\log(\delta^{-1})}{2T}},
\]
where \( r_\infty = r_\infty + \epsilon_b \). To bound the prediction error above by
\[
\epsilon = (\Lambda_1 \epsilon_b + \epsilon_y)^2 + \epsilon_3,
\]
it suffices to choose \( T \) such that:
\[
2 r_\infty \Lambda_1 M \sqrt{\frac{2 \log(2m)}{T}} + M^2 \sqrt{\frac{\log(\delta^{-1})}{2T}} \leq \frac{\epsilon_3}{2},
\]
and substituting for \( M \) and \( r_\infty \), solving for \( T \), gives the sample complexity:
\[
T \geq \frac{\left( \Lambda_1(2r_\infty + \epsilon_b) + \epsilon_y \right)^4 \sqrt{2 \log(2m / \delta)}}{\epsilon_3^2}.
\]
By setting \( \Lambda_1 \epsilon_b = 0.2 \epsilon \), \( \epsilon_y = 0.5 \epsilon \), and \( \epsilon_3 = 0.4 \epsilon \), the prediction error is bounded by
\[
(\Lambda_1 \epsilon_b + \epsilon_y)^2 + \epsilon_3 \leq \epsilon,
\]
provided that
\[
T > \frac{(2 \Lambda_1 r_\infty)^4 \sqrt{2 \log(2m / \delta)}}{\epsilon^2}.
\]
which concludes the proof of \Cref{thm:lasso}.

\section{Complexity assumption}
\label{app:clxty}
 
 In this work, we present learning separation statements that rely on the following widely believed conjecture. 

 \begin{conjecture}
 \label{conj:1}
    $\mathsf{BQP}\not\subset \mathsf{P}/\mathsf{poly}$.
\end{conjecture}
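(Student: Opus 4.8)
The statement $\mathsf{BQP}\not\subset\mathsf{P}/\mathsf{poly}$ is a major open problem, so I do not offer a proof; what I can offer is the plan one would actually follow and an account of why it stops short. The only currently viable route is \emph{conditional}. It suffices to point to a single decision problem that lies in $\mathsf{BQP}$ but has no $\poly$-size classical circuit family, since if $f\in\mathsf{BQP}$ and $\mathsf{BQP}\subseteq\mathsf{P}/\mathsf{poly}$ then $f\in\mathsf{P}/\mathsf{poly}$. So the first step would be to take a decision version of factoring or of discrete logarithm --- both placed in $\mathsf{BQP}$ by Shor's algorithm --- and invoke its non-uniform hardness, a hypothesis that is entirely standard in cryptography; the same works for any of the other problems known to sit in $\mathsf{BQP}$ and believed to be classically intractable. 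Under any such assumption the conjecture follows, so the classical-hardness statements of this paper may equivalently be read as conditioned on whichever of these more familiar hypotheses the reader prefers.

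If instead one insisted on an unconditional argument, the plan would have to be a super-polynomial lower bound for an explicit (indeed $\mathsf{BQP}$-computable) Boolean function against general circuits. This is where the real difficulty lies: any such argument must evade the relativization and algebrization barriers, and a constructive size lower bound against $\mathsf{P}/\mathsf{poly}$ would run into the natural-proofs barrier of Razborov and Rudich. As a measure of the gap, even whether $\mathsf{EXP}\subseteq\mathsf{P}/\mathsf{poly}$ is still open, with only Karp--Lipton-style collapses known (for instance $\mathsf{EXP}\subseteq\mathsf{P}/\mathsf{poly}\Rightarrow\mathsf{EXP}=\mathsf{MA}$) and no outright contradiction, so an unconditional resolution of the $\mathsf{BQP}$ statement is well beyond present techniques.

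The upshot is that the ``main obstacle'' here is essentially the whole statement: resolving it unconditionally would be a breakthrough in circuit complexity. I therefore adopt \Cref{conj:1} as a hypothesis throughout, emphasizing that it is implied by weaker and more standard assumptions and that every classical-learning-hardness claim in this paper inherits exactly this conditioning.
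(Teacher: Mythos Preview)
Your treatment is correct and matches the paper's own stance: the statement is labeled a \emph{conjecture}, not a theorem, and the paper offers no proof, instead adopting it as a standing hypothesis for the classical-hardness results and citing factoring and discrete logarithm as evidence (see \Cref{app:clxty}). Your discussion of the conditional route and of the barriers to an unconditional proof is more detailed than what the paper provides, but entirely consistent with it.
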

This is a weaker conjecture than the following one used in previous works.
 \begin{conjecture}
 \label{conj:2}
    There exists a distribution $\mathcal{D}$ such that $\mathsf{BQP}\not\subset \mathsf{HeurP}^{\mathcal{D}}\mathsf{/poly}$.
\end{conjecture}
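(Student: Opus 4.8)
The plan is to argue the contrapositive: if the \ccb~concept class of \Cref{def:CCte} were classically efficiently PAC learnable in the sense of \Cref{def:pac}, then $\mathsf{BQP}\subseteq\mathsf{P/poly}$. The construction rests on two ingredients: (i) embedding an arbitrary polynomial-time quantum computation into a single concept of the class at $\alpha=0$, and (ii) a generic ``learning implies non-uniform simulation'' argument.

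For (i), I would fix a promise problem $\mathcal{L}$ complete for $\mathsf{PromiseBQP}$, decided by a uniform family of $k_n$-gate circuits $U_n$ ($k_n\in\poly(n)$) and a fixed simple observable $O$ (e.g.\ $\ketbra{1}{1}$ on the output wire), with a $1/\poly(n)$ separation between the values of $\bra{x}U_n^\dagger O U_n\ket{x}$ on yes- and no-instances. Letting $V_n(x)$ be the $X$-gate circuit that prepares $\ket{x}$ from $\ket{0}$, the composite $U_nV_n(x)$ is a $\poly(n)$-gate circuit whose gate list depends on $x$ only through which $X$ gates are active. Feeding this into the Feynman--Kitaev clock construction of \Cref{lem:Hhard} yields a local Hamiltonian $H_n(x)$ on the $n$ data qubits plus $O(\log\poly(n))$ clock qubits, with $\poly(n)$ terms and a description computable from $x$ in classical polynomial time, satisfying $e^{i\pi H_n(x)}\ket{0}\ket{0}=U_n\ket{x}\ket{k_n}$. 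Taking the fixed time $\tau=\pi$ and an observable supported on the data register (times a projector onto the terminal clock value), the fixed matrix element demanded by \Cref{def:CCte} reproduces $\bra{x}U_n^\dagger O U_n\ket{x}$ up to a fixed affine rescaling; the superficial mismatch between this expectation value and the off-diagonal form $\bra{1}(\cdot)\ket{0}$ of \Cref{def:CCte} can be removed by folding a standard Hadamard-test gadget into the Hamiltonian. Finally I would append inert parametrized local terms $\sum_{l=1}^{d_n}\alpha_l P_l$ with $d_n$ chosen as $\log n$ (so that the hardness matches the regime of \Cref{thm:TEqeff}), obtaining a genuine instance of the \ccb~class in which the concept $c_0$ at $\alpha=0$ has the property that $\operatorname{sign}(c_0(x)-\vartheta)$ decides $\mathcal{L}$ for a fixed threshold $\vartheta$; i.e.\ $c_0$ is $\mathsf{BQP}$-hard to evaluate.

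For (ii), I would invoke Lemma~2 of \cite{molteni_exponential_2024}: if a concept class containing a $\mathsf{BQP}$-hard concept admits a classical efficient PAC learner, then $\mathsf{BQP}\subseteq\mathsf{P/poly}$. In outline, one runs the assumed learner on $c_0$ with a small $\delta$ and suitable input distributions; since it uses $\poly(n)$ samples and $\poly(n)$ time, a ``good'' training set together with the learner's internal randomness forms $\poly(n)$-size non-uniform advice that yields a $\poly(n)$-size circuit agreeing with $c_0$ on all but a $\delta$-fraction of inputs. Driving $\delta$ below the relevant fraction and using the $1/\poly(n)$ promise gap to turn approximate agreement into an exact sign decision places $\mathcal{L}$ in $\mathsf{P/poly}$, and completeness of $\mathcal{L}$ gives $\mathsf{BQP}\subseteq\mathsf{P/poly}$. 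Contraposing yields the theorem, and pairing it with \Cref{thm:TEqeff} gives the quantum--classical learning separation for the Hamiltonian-dynamics problem.

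I expect the main obstacle to be forcing the reduction into the rigid template of \Cref{def:CCte} --- one fixed evolution time, one fixed pair of boundary states, the prescribed observable/matrix-element form --- while preserving a $1/\poly(n)$ promise gap all the way through the clock construction. The second delicate point is the PAC-to-$\mathsf{P/poly}$ step, where one must handle the promise, the choice of input distribution (here the distribution-independence built into \Cref{def:pac} helps, since it is a stronger hypothesis to refute), and the passage from ``$\epsilon$-accurate on a $(1-\delta)$-fraction'' to an honest decision procedure. Both pieces are essentially imported from \cite{molteni_exponential_2024} and \cite{childs_quantum_2004}, so once the encoding is pinned down the argument should close.
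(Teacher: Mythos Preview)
There is a basic mismatch: the statement you were asked to address is \Cref{conj:2}, and it is a \emph{conjecture}, not a theorem. The paper does not prove it and does not claim to; it merely records it as a widely believed complexity-theoretic assumption (stronger than \Cref{conj:1}) and then quotes \Cref{lem:heurppolybqp} to explain why it is plausible. There is nothing here for a ``proof proposal'' to target.

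What you have written is instead a proof sketch for the second theorem labeled \texttt{thm:PQChard} --- the classical hardness of the \ccb~concept class --- and for that statement your outline is essentially the paper's own argument (embed a $\mathsf{BQP}$-complete problem via \Cref{lem:Hhard}, append inert $\alpha$-parametrized Pauli terms so that $\alpha=0$ recovers the hard concept, then invoke Lemma~2 of \cite{molteni_exponential_2024}). But that theorem's conclusion is the conditional ``if $\mathsf{BQP}\not\subset\mathsf{P/poly}$ then the class is not classically efficiently PAC learnable''; it neither proves nor uses \Cref{conj:2}, and certainly does not establish that some $(\mathcal{L},\mathcal{D})$ lies outside $\mathsf{HeurP/poly}$. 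So even on its own terms your argument does not touch the statement in question.
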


Nonetheless, this too is believed to be true: the discrete logarithm problem or factoring are not believed to be in $\mathsf{HeurP}\slash \mathsf{poly}$. For these problems, the average-case to worst-case reductions imply that if either problem is efficiently solvable heuristically, then it is also efficiently solvable in the worst case as well. 
This does not hold for  $\mathsf{BQP}$ functions in general, so Conjecture 1 is indeed weaker. Still, in general, by Lemma 3 in \cite{gyurik_exponential_2023} if there exists a single $\mathcal{L} \in \mathsf{BQP}$ that is not in $\mathsf{HeurP/poly}$ under some distribution, then for every $\mathsf{BQP}$-complete problem, there exists a distribution under which the problem is not in $\mathsf{HeurP/poly}$. 
\begin{lemma}[from \cite{gyurik_exponential_2023}]
If there exists a $(L, \mathcal{D}) \not\in \mathsf{HeurP/poly}$ with $L \in \mathsf{BQP}$, then for every $L' \in \mathsf{BQP}$-$\mathsf{complete}$ there exists a family of distributions $\mathcal{D}' = \{\mathcal{D}'_n\}_{n \in \mathbb{N}}$ such that $(L', \mathcal{D}') \not\in \mathsf{HeurP/poly}$.
\label{lem:heurppolybqp}
\end{lemma}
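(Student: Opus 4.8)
The plan is to transport the assumed average-case hardness of $(L,\mathcal{D})$ onto $L'$ by pushing the distribution $\mathcal{D}$ forward along a worst-case many-one reduction. Since $L'$ is $\mathsf{BQP}$-complete and $L\in\mathsf{BQP}$, there is a polynomial-time computable many-one reduction $R$ with $x\in L\iff R(x)\in L'$; by the usual padding argument (and since the natural $\mathsf{BQP}$-complete problems are paddable) I may assume $R$ is length-regular, i.e.\ $|R(x)|=p(|x|)$ for a strictly increasing polynomial $p$ with $p(n)\ge n$. First I would define the ensemble $\mathcal{D}'=\{\mathcal{D}'_m\}_m$ by letting $\mathcal{D}'_{p(n)}$ be the law of $R(x)$ for $x\sim\mathcal{D}_n$, and $\mathcal{D}'_m$ the point mass on $0^m$ for lengths $m$ not of the form $p(n)$. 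Because distinct input lengths map to distinct output lengths this is well defined, and $\mathcal{D}'$ is polynomial-time samplable whenever $\mathcal{D}$ is (sample $x\sim\mathcal{D}_n$ and output $R(x)$), so it is a legitimate distribution for the distributional problem $(L',\mathcal{D}')$.

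Next I would argue by contradiction: suppose $(L',\mathcal{D}')\in\mathsf{HeurP/poly}$. Fix an arbitrary (WLOG non-decreasing) polynomial $q$; the goal is a polynomial-size circuit family for $L$ that errs on at most a $1/q(n)$ fraction of $\mathcal{D}_n$. Invoking the $\mathsf{HeurP/poly}$ guarantee for $(L',\mathcal{D}')$ with error polynomial $q'=q$ gives a polynomial-size circuit family $\{C_m\}_m$ with $\Pr_{y\sim\mathcal{D}'_m}[C_m(y)\ne L'(y)]\le 1/q(m)$. I then form $C'_n$ by hardwiring the uniform, polynomial-time computation of $R$ on length-$n$ inputs — a circuit of size $\poly(n)$ needing no advice — and composing it with $C_{p(n)}$; this has size $\poly(n)$ and uses exactly the advice of $C_{p(n)}$, i.e.\ $\poly(p(n))=\poly(n)$ bits. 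The error estimate is a change of variables,
\begin{align*}
\Pr_{x\sim\mathcal{D}_n}\big[C'_n(x)\ne L(x)\big]
&=\Pr_{x\sim\mathcal{D}_n}\big[C_{p(n)}(R(x))\ne L'(R(x))\big]\\
&=\Pr_{y\sim\mathcal{D}'_{p(n)}}\big[C_{p(n)}(y)\ne L'(y)\big]
\le\frac{1}{q(p(n))}\le\frac{1}{q(n)},
\end{align*}
using $L(x)=L'(R(x))$, the definition of $\mathcal{D}'$, and $q(p(n))\ge q(n)$. Since $q$ was arbitrary, $(L,\mathcal{D})\in\mathsf{HeurP/poly}$, contradicting the hypothesis; hence $(L',\mathcal{D}')\notin\mathsf{HeurP/poly}$, which is the claim.

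I expect the real difficulties to be bookkeeping rather than conceptual, concentrated in two places. The main one is (i) making the reduction length-regular: this is routine padding for paddable complete problems, but in full generality one must either fix a convenient $\mathsf{BQP}$-complete $L'$ or invoke the general theory of reductions between distributional problems to correctly aggregate the contributions of different input lengths into each $\mathcal{D}'_m$. The secondary point is (ii) checking that the quantifier structure in the definition of $\mathsf{HeurP/poly}$ survives the composition — it does, because $p$ is a fixed polynomial with $p(n)\ge n$, so a ``for every inverse-polynomial error rate'' guarantee (or any fixed non-increasing error schedule) for $L'$ translates into the corresponding one for $L$. One should also keep in mind that $\mathsf{BQP}$ is a promise class, so ``$\mathsf{BQP}$-complete'' and the reduction $R$ are to be read in the promise-problem sense; this does not affect the argument. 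If one insists the distributions be samplable, the pushforward already is; if not, the proof only simplifies.
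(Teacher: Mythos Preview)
The paper does not actually prove this lemma; it merely restates it and attributes it to \cite{gyurik_exponential_2023}, so there is no in-paper proof to compare against. Your argument is the standard one and is correct: push the hard distribution forward along a (length-regular) many-one reduction from $L$ to the $\mathsf{BQP}$-complete $L'$, then observe that any $\mathsf{HeurP/poly}$ solver for $(L',\mathcal{D}')$ composes with the reduction to give a $\mathsf{HeurP/poly}$ solver for $(L,\mathcal{D})$, contradicting the hypothesis. This is exactly the approach used in the cited source, and your identification of the two bookkeeping points (length-regularity via padding, and the promise-problem reading of $\mathsf{BQP}$-completeness) is appropriate.
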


\section{\subroutine~algorithm}
\label{app:fourqru}
\subsection{Fourier representation of parametrized circuits}
In this section, we prove the \Cref{thm:fourqru} and describe the algorithm $\mathcal{A}$. As mentioned in the main text, without loss of generality, we consider that every circuit as defined in \Cref{def:paulienc} uses strings of identity and $Z$ matrices encoding. Indeed, if encoding with $X$ or $Y$ appear they can be changed to $Z$ with local unitary gates, which can be absorbed in the set of fixed gates. 

\subsubsection{Description of the algorithm}
The input to the algorithm $\mathcal{A}$ is a description of the $n$-qubit circuits as an alternating sequence between sets of fixed gates resulting in unitaries $\{U_l\}_{0\leq l \leq L}$ and encoding gates $\{e^{i \alpha_{s_l} \prod_{1\leq k \leq n} Z_k^{b_{l,k}}}\}_{1\leq l \leq L}$ where $s_l$ describe the index of the dimension of the data that is encoded, and $b_l$ is the bitstring of the qubits affected by the encoding gate. 

The output of the algorithm $\mathcal{A}$ is a description of a circuit on a larger number of qubits. Registers are added to the existing circuit register to keep track of frequencies of $\alpha$ on each of the dimensions, with each register requiring a number of qubits depending on the number of times that this feature is reuploaded, as follows:
\begin{equation}
    n_s = 1+\lceil2\log(\sum_l s_l=s)\rceil
\end{equation}

There is also a single additional ancillary qubit that is used to compute parities; therefore, in total, there are $n_T$ qubits as follows:
\begin{equation}
    n_T = n + 1 + \sum n_s
\end{equation}

We name the registers $f= f_0 \cdots f_s$ for the frequency registers, $a$ for the ancillary qubit, and $c$ for the circuit register. A gate $G$ applied to the register $r$ will be written as $G^{(r)}$. Control on the $\ket{1}$ ($\ket{0}$) state are written as $C$ ($\bar{C}$). The frequency registers are also indexed by negative numbers, we write the basis as follows $\{\ket{k}\}_{-(n_s-1)/2 \leq k \leq (n_s-1)/2}$
For the frequency register we define the increment (decrement) gate with unitary matrix $V_+$ ($V_-$=$V_+^{\dagger}$) ensuring unitarity with a circular condition as such:
\begin{align}
    V_+\ket{k} &=\ket{(k+1)}, \forall k<(n_s-1)/2 \\
    V_+\ket{(n_s-1)/2}&=\ket{-(n_s-1)/2}
\end{align}

The algorithm $\mathcal{A}$ returns a sequence of gates that follows that of the original circuit, where fixed gates are unchanged and applied to the $c$ register, and encoding gates $\{s_l,b_l\}$ are transformed as follows. For each qubit affected non trivially by the encoding gate (that is, with a $Z_k$ generator when $b_{l,k}=1$) the parity is computed on the ancillary qubit, with a sequence of CNOT gates as follows:
\begin{equation}
    D(b_l) = \prod_{k\mid b_{l,k}=1} C^{(c_k)}X^{(a)}
\end{equation}
At this stage, the ancillary encodes the parity of the sub-bitstring for indexes $b_l$. Then the frequency register of the dimension of the input being encoded $s_l$ is acted on based on the parity encoded in the ancillary as such:
\begin{equation}
    G(s_l) = \left(C^{(a)}V_+^{(f_{s_l})}\right) \left(\bar{C}^{(a)}V_-^{(f_{s_l})} \right)
\end{equation}
Effectively, the subspace where the parity of the substring is even sees an increment in the frequency of the input being encoded, while the other subspace sees a decrement. 
Finally, the ancillary qubit is reset to be reused later with $D(b_l)^\dagger = D(b_l)$. 

The output of the algorithm is an alternating sequence between sets of gates being unchanged from fixed gates resulting in unitaries $\{U_l^{(c)}\}_{0\leq l \leq L}$ and gates replacing encoding gates as $\{D(b_l)G(s_l)D(b_l)\}_{0\leq l \leq L}$.

\begin{figure}[h]
    \centering
    \begin{quantikz}[row sep=0.1cm, column sep=0.2cm]
        \lstick{$\ket{0}$} & \qw & \gate[2]{U_A} & \gate{Z(\pi x_1)} 
        & \gate[2]{U_B} &                 & \rstick{$=U(x)$}\\
        \lstick{$\ket{0}$} & \qw &               &                
        &               & \gate{Z(\pi x_2)} &
    \end{quantikz}
    \begin{quantikz}[row sep=0.1cm, column sep=0.2cm]
        \lstick{$\ket{0}$} & \qwbundle{m} &               & \gate{V_+} 
        & \gate{V_-} &                 &            &           &\\
        \lstick{$\ket{0}$} & \qwbundle{m} &               &
        &            &                 & \gate{V_+} & \gate{V_-}&\\
        \lstick{$\ket{0}$} & \qw          & \gate[2]{U_A} & \ctrl{-2}   
        & \octrl{-2} & \gate[2]{U_B}   &            &           & 
        \rstick{$=\mathcal{A}(U)$}\\
        \lstick{$\ket{0}$} & \qw          &               &             
        &            &                 & \ctrl{-2}  & \octrl{-2} &  
    \end{quantikz}
    \caption{Illustration of the \textit{\subroutine} algorithm for multiple inputs}
    \label{fig:example-algomulti}
\end{figure}

\subsubsection{Proof of the algorithm}
Finally, we prove that the algorithm $\mathcal{A}$ indeed outputs what it is promised to return. That is, given $U$ such that
\begin{equation}
        \ket{\phi} = U\ket{0} = \sum_{1 \leq k \leq 2^n } \sum_{ l \in [-L,+L]^d} a_{l,k} e^{i\alpha\cdot l}\ket{k} \,,
\end{equation}
then it returns, 
\begin{equation}
    \ket{\phi'} = \mathcal{A}(U) \ket{0} = \sum_{1 \leq k \leq 2^n } \sum_{ l \in [-L,+L]^d} a_{l,k} \ket{l}^{(f)} \ket{0}^{(a)}\ket{k}^{(c)} \,.
\end{equation}
For the rest of the proof, we shorten the indices $k$ and $l$ for easier notation. This is an induction proof in three steps:
\begin{enumerate}
    \item Initial step : This is trivial, if $\ket{\phi} = \ket{0}$ then $\ket{\phi'}$  = $\mathcal{A}(I) \ket{0} = \ket{0}^{(f)}\otimes \ket{0}^{(a)} \otimes \ket{0}^{(c)}$.
    \item Fixed gate step: Suppose that the algorithm works as intended, before the application of a set of fixed gates with unitary $U$. Let us show that as algorithm $\mathcal{A}$ applies $U' = I^{(f)}\otimes I^{(a)} \otimes U^{(c)}$, it yields the correct state. It is easy to see that $U\ket{\phi} =  \sum_{l} (\sum_{k}  a_{l,k} U \ket{k}) e^{i\alpha\cdot l}$, and $U'\ket{\phi} =  \sum_{l} (\sum_{k}  a_{l,k} U \ket{k}^{(c)}) \ket{0}^{(a)} \ket{l}^{(f)}$.
    \item Encoding gate step: Suppose that the algorithm works as intended before the application of an encoding gate $\{s_l,b_l\}$. Let us prove that applying $D(b_l)G(s_l)D(b_l)$ yields the correct state. We prove this below.
\end{enumerate}
If these three steps are true, then by induction, the algorithm is correct.

\textbf{Proof of step 3.} The effect of the encoding gate $\{s_l,b_l\}$ on the state $\ket{\phi}$ is as follows
\begin{equation}
    e^{i \alpha_{s_l} \prod_{k'} Z_{k'}^{b_{l,k'}}} \ket{\phi} =  \sum_{k} \sum_{l}  a_{l,k} e^{i \alpha_{s_l} p(k,b_l)} \ket{k} e^{i\alpha\cdot l} = \sum_{k} \sum_{l}  a_{l,k}  \ket{k} e^{i\alpha\cdot (l + e_{s_l} p(k,b_l))}
\end{equation}
Where $p(k,b_l)$ is the parity of the substring of $k$ with indices $b_l$ as follows $p(k,b_l)= (-1)^{\sum_{k'}k_{b_{l,k'}}}$. We also write $e_{r}$ as the vector such that only the $r$-th element is $1$ and the rest is $0$.

On the other hand, for the $\mathcal{A}(U)$ computation we have:
\begin{align}
    D(b_l) \ket{\psi'} &=  
    \sum_{k} \sum_{l} a_{l,k} \ket{l}^{(f)} (X^{\sum_{k'} k_{b_{k'}}}\ket{0}^{(a)}) \ket{k}^{(c)}\\
    G(s_l) D(b_l) \ket{\psi'} &=
    \sum_{k} \sum_{l} a_{l,k} \ket{l + e_{s_l} p(k,b_l)}^{(f)} (X^{\sum_{k'} k_{b_{k'}}}\ket{0}^{(a)}) \ket{k}^{(c)}\\
    D(b_l) G(s_l) D(b_l) \ket{\psi'} &=
    \sum_{k} \sum_{l} a_{l,k} \ket{l + e_{s_l} p(k,b_l)}^{(f)} \ket{0}^{(a)} \ket{k}^{(c)}\\
\end{align}
The states correspond to each other, which concludes the proof to the iterative step 3. 

\subsection{Frequency representation of expectation values}
\label{app:fourexp}
So far, we have seen how to get the Fourier representation of a quantum reuploading circuit, but what about quantum function, that is, the expectation values of some observable? 
In this section, we prove \Cref{cor:statefourfun} and \Cref{cor:coefextract}.
We explain how to do this for Pauli observables and extend this result to linear combinations of Pauli observables and projectors.

\subsubsection{Pauli observable}

\begin{figure}[h]
    \centering
    \begin{quantikz}[row sep=0.1cm, column sep=0.2cm]
        \lstick{$\ket{0}$} & \gate[3]{\mathcal{A}(U)}&           
        &\gate[3]{\mathcal{A}(U)^{\dagger}}&&\\
        \lstick{$\ket{0}$} &                         &\gate[2]{P}
        &                                  &\gate[2]{\ketbra{0}{0}} \rstick{$=\mathcal{A}(U,P)$}\\
        \lstick{$\ket{0}$} &                         &           
        &                                  &
    \end{quantikz}
    \caption{Illustration of the quantum function evaluation algorithm}
    \label{fig:algo}
\end{figure}
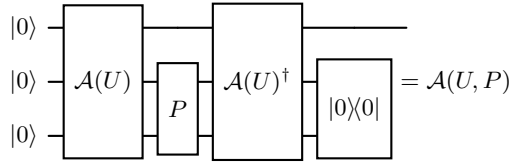

Without loss of generality (using local changes of basis), the Pauli string measurement can be considered a $Z$ string and therefore $\ket{\phi(\alpha)}$ has the following form (where $p(P,k)$ is a parity function).
\begin{equation}
    f(\alpha) = \langle P \rangle(\alpha) =  \sum_{-L\leq l',l \leq L} \sum_k (-1)^{p(P,k)} a_{l,k} a_{l',k}^* e^{i(l-l')\cdot\alpha} = \sum_{-2L \leq l \leq 2L} b_l e^{i l \cdot \alpha}
\end{equation}
We define the state $\ket{\phi(\alpha),P} \coloneq U(\alpha)^{\dagger}PU(\alpha)\ket{0}$. It has the following amplitude for $\ket{0}$.
\begin{equation}
    \braket{0}{\phi(\alpha),P}=\bra{0}U(\alpha)^{\dagger}PU(\alpha)\ket{0} = \sum_{-2K \leq k \leq 2K} b_k e^{i k \alpha}
\end{equation}
Using the algorithm $\mathcal{A}$ on the state $\ket{\phi(\alpha),P}$ we can loose the dependence on $\alpha$ and efficiently get the Fourier representation state:
\begin{equation}
    \ket{\phi,P} = \sum_{-2K \leq k \leq 2K} b_k \ket{k} \ket{0} + \text{trash}\,.
\end{equation}

Post-selecting on $\ket{0}$ has a success probability $\sum \abs{b_k}^2$.
The full algorithm is illustrated in \Cref{fig:algo}.

\subsubsection{Linear combination of Pauli observables}
In this subsection, we extend the procedure above to a more generic observable, more specifically, a linear combination of polynomially many Pauli strings. This is can be done using a Linear Combination of Unitaries approach. Supposing that the observable of interest may be decomposed as follows
\begin{equation}
    O = \sum_h \beta_h P_h
\end{equation}
Following the procedure described previously, one may prepare the states $\ket{\phi,P_h}$ for each $P_h$. By the linearity of the expectation value, adding them yields the expectation value of the full observable.
\begin{equation}
    \ket{\phi,O} = \sum_h \beta_h \ket{\phi,P_h}
\end{equation}
The addition may be done using a linear combination of unitary approach, which requires an additional register with a number of qubits logarithmic in the number of terms in the observable, and the preparation of the following state. 
\begin{equation}
    V_\beta \ket{0} = \frac{1}{\lVert\beta\rVert}\sum_h \beta_h \ket{h}
\end{equation}
It also requires post-selection, which causes overhead in complexity (polynomial for reasonable observables). Once the state is prepared, the same sampling and post-processing may be implemented.

\begin{figure}[h]
    \centering
    \begin{quantikz}[row sep=0.1cm, column sep=0.2cm]
        \lstick{$\ket{0}$} &              & \qw & \gate[2]{V_\alpha}
        &\ctrl{1}     &\octrl{1}    &\ctrl{1}     &\octrl{1}    
        &\gate[2]{V_\alpha^{\dagger}}        &\gate[2]{\ketbra{0}{0}}\\
        \lstick{$\ket{0}$} &              & \qw &                 
        &\ctrl{3}     &\ctrl{3}     &\octrl{3}    &\octrl{3}    
        &                                    &                       \\
        \lstick{$\ket{0}$} & \qwbundle{m} & \qw & \gate[3]{c(U)}
        &             &             &             &             
        &\gate[3]{\mathcal{A}(U)^{\dagger}}&   &
        \rstick{$=\mathcal{A}(U,\sum \alpha_h P_h)$}\\
        \lstick{$\ket{0}$} & \qwbundle{m} & \qw &                         
        &             &             &             &
        &                                    &    &\\
        \lstick{$\ket{0}$} & \qwbundle{n} & \qw &                   
        &\gate{P_1}   &\gate{P_2}   &\gate{P_3}   &\gate{P_4}   
        &                                    &\gate{\ketbra{0}{0}}   
    \end{quantikz}
    \caption{Illustration of the quantum function evaluation algorithm for arbitrary observables}
    \label{fig:example-algo2}
\end{figure}
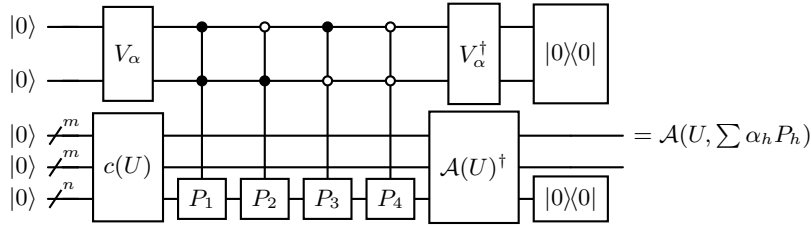

\subsubsection{Probabilities, or projectors observables}
Consider a circuit $U$ yielding a pure state $U\ket{0} = \ket{\phi} = \sum_k \phi_k \ket{k}$. Its density matrix is $\rho=\sum_{k,l} \phi_k \phi_l^* \ketbra{k}{l}$. Defining the conjugate circuit as $U^*$, we have
\begin{equation}
    (U \otimes U^* )(\ket{0} \otimes \ket{0})=\sum_{k,l} \phi_k \phi_l^* \ket{k}\otimes \ket{l}
\end{equation}
Therefore, if one wanted to retrieve the probability of measuring $\ket{0}$ of circuit $U$ one could retrieve the amplitude as above. This yields the procedure illustrated in \Cref{fig:example-projector} to retrieve the coefficients for the observable $\ketbra{0}{0}$, that is the probability of measuring $\ket{0}$ on the original circuit.
\begin{figure}[h]
    \centering
    \begin{quantikz}[row sep=0.1cm, column sep=0.2cm,transparent]
        \lstick{$\ket{0}$} & \qwbundle{m} &\qw & \gate[2]{\mathcal{A}(U)}
        &\gate[3,label style={yshift=0.3cm}]{\mathcal{A}(U^*)}&\qw&\\
        \lstick{$\ket{0}$} & \qwbundle{n} &\qw &                        
        &\linethrough                                         &\gate[2]{\ketbra{0}{0}}\\
        \lstick{$\ket{0}$} & \qwbundle{n} &\qw &                        
        &                                                     &
    \end{quantikz}
    \caption{Illustration of the quantum function evaluation algorithm for a projector}
    \label{fig:example-projector}
\end{figure}
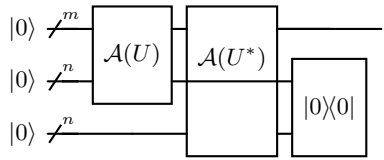

\subsubsection{Extracting Fourier coefficients}
Once the Fourier decomposition is available, one may be interested in retrieving the value coefficients. Suppose we have the following state as an output.
\begin{equation}
    \mathcal{A}(U,O) \ket{0} \ket{0} = \ket{\phi,O} = \sum_{-2K \leq k \leq 2K} b_k \ket{k} \ket{0} + \text{trash}
\end{equation}
Suppose one is interested in the coefficient of frequency $l$. We call $V_l$ any unitary such that $\ket{l}=V_l\ket{0}$, we have
\begin{equation}
    b_l = \bra{0}\bra{0} ( V_l^{\dagger} \otimes I) \mathcal{A}(U) \ket{0} \ket{0}
\end{equation}
We use a Hadamard test to extract the real and imaginary part of this coefficient as in \Cref{fig:example-algofour}. In addition, because the function is real, we have $b_k = b_{-k}$. Therefore, in case there is a polynomial number of frequencies, one may retrieve all of them efficiently, and create a classical surrogate for the quantum function. 
\begin{figure}[h]
    \centering
    \begin{quantikz}[row sep=0.1cm, column sep=0.2cm]
        \lstick{$\ket{0}$} & \gate{H}    & \ctrl{1}                
        & \ctrl{1} & \gate{H} & \meter{Z}\\
        \lstick{$\ket{0}$} & \qwbundle{m}& \gate[2]{\mathcal{A}(U)}
        &\gate{V_l}&          &\\
        \lstick{$\ket{0}$} & \qwbundle{n}&                         
        &          &          &
    \end{quantikz}
    \caption{Illustration of the \textit{\subroutine} algorithm}
    \label{fig:example-algofour}
\end{figure}

\section{Cardinality of the concept class}
\label{app:card}
First we will find a relatively easy bound for the covering number of the following set of functions:
\begin{equation}
    \mathcal{C} = \{c_{\alpha}: x \rightarrow \sum_{-L\leq l \leq L} b_l(x)e^{il\cdot\alpha},\lVert c_{\alpha}\rVert_{\infty}\leq1\}_{\alpha \in [0,1]^d }
\end{equation}
First we construct a covering net using the smoothness of the function with respect to its index $\alpha$. 
The partial derivative with respect to a single parameter is as follows,
\begin{equation}
     \lvert\partial_{\alpha_k} c_{\alpha}(x) \rvert\leq  \lvert  \sum_li l_k b_l(x)e^{il\cdot\alpha} \rvert \leq L \sum_l   \lvert b_l(x)e^{il\cdot\alpha} \rvert  \leq L \lVert b(x)\rVert_1\,.
\end{equation}
We can use it to bound the 1-norm of the gradient as,
\begin{equation}
    \lVert\nabla_\alpha c_{\alpha}(x)\rVert_{1} \leq d  L \lVert b(x)\rVert_1 \,.
\end{equation}
We consider a regular grid with $M$ divisions for each of the $d$ dimensions.
We call this grid $\mathcal{A} = \{\alpha_g\}_{g}$ , with $\lvert\mathcal{A}\rvert=M^d$. 
Therefore, each $\alpha$ will be at least $\lVert \delta_{\alpha}\rVert \coloneqq 1/M$ close to a grid point. 
We have 
\begin{align}
    \lVert c_{\alpha}-c_{\alpha_g} \rVert_{\infty} &\leq \nabla_\alpha c_\alpha \cdot (\alpha-\alpha_g) \\
    &\leq \lVert\nabla_\alpha c_{\alpha_g}\rVert_{1} \lVert\delta_\alpha(\alpha) \rVert_{\infty} \\
    &\leq dL\lVert b(x)\rVert_1 /M\,.
\end{align}

To guarantee the error is lower than $\epsilon$ for all $\alpha$ it is sufficient to choose
\begin{equation}
    M = dL \lVert b(x)\rVert_1/\epsilon\,.
\end{equation}
We are interested in the scaling of the grid when $d\in O(\log(n))$ and $L\in O(1)$. 
For all $x$ we have $\lVert b(x) \rVert_2 \leq 1$ and therefore $\lVert b(x)\rVert_1 \leq (2L+1)^{d/2} \in O(\poly(n))$.
This yields a number of grid points scaling as
\begin{equation}
    G=(dL\lVert b(x)\rVert_1/\epsilon)^d\in O(n^{\log(n)})\,.
\end{equation}

If we leave this function family $\mathcal{C}$, we can apply the approach of \cite{caro_generalization_2022} states bounds on covering numbers of shot-based expectation values of parametrized quantum circuits.
This allow us to get a smaller hypothesis class of proxy functions approximating the functions of the concept class.
Theorem 3 in the supplementary material of \cite{caro_generalization_2022} upper bounds the covering number for a PQC where $d$ parameters ($T$ in the original paper) are reuploaded $L$ times ($M$ in the original paper).
In order to get the expectation values up to $\eta$ additive accuracy, the number of reuploadings is multiplied by $\eta^{-2}$
Using $d\in O(\log n)$ and $L \in O(1)$ we get the following scaling:
\begin{equation}
    G \in O\left(\left(\frac{d L}{\epsilon \eta^2}\right)^d\right) \subseteq O(n^{\log\log n})\,.
\end{equation}

Although this scaling is closer to a polynomial bound, it is not polynomial, but this might be an artefact of the proof.
The key question of whether there exist a polynomial and constructable set of of hypothesis functions $\mathcal{H}$ that $\epsilon$-covers $\mathcal{C}$ remains open.

\section{Alternative Oracle-based algorithms}
\label{app:oracles}
Considering the settings of \Cref{sec:freqsamp}, with a $U(\alpha)$ defined as above, suppose that instead of the specification of the parametrized quantum circuit $U$ as a sequence of gates, we are given an oracle $O_U$ such that for a $m$-binary decomposition over the $d$-dimensional parameter $\alpha$, we have 
$$\ket{\alpha}\otimes\ket{\psi}\xrightarrow{O_U}\ket{\alpha}\otimes U(\alpha)\ket{\psi}\,.$$
Applying this oracle to the equal superposition state over the frequency registers, followed by a Quantum Fourier Transform applied to each coordinate frequency register, yields the same result as the algorithm described previously. We define a regular grid over the inputs $\{\alpha_l\}$ such that $\frac{1}{\sqrt{\lvert \mathcal{L} \rvert}} \sum_{l\in\mathcal{L}} \alpha_l = \ket{+}$
\begin{align}
    \ket{0}\ket{0}&\xrightarrow{H^{\otimes d m}\otimes I}
    \sum_{l\in\mathcal{L}}\ket{\alpha_l}\otimes\ket{0} \\
    &\xrightarrow{O_U} \sum_{l\in\mathcal{L}}\ket{\alpha_l}\otimes U(\alpha_l)\ket{0}\\
    &\xrightarrow{\operatorname{QFT}_m^{\otimes d}\otimes I}\sum_{l\in\mathcal{L}}\sum_{k\in[1,n]} a_{l,k} \ket{l}\ket{k}\,.
\end{align}
The above result provides an analogue for states decomposition as in \Cref{thm:fourqru}; similarly, we can obtain the analogue for PQC-functions as in \Cref{cor:statefourfun}.
For functions, an amplitude oracle can be defined as follows. 
For a function $f:\{0,1\}^*\rightarrow[0,1]$ that takes a binary decomposition over $\alpha$, an amplitude oracle $O_f$ yields:
$$\ket{\alpha}\otimes\ket{0}\xrightarrow{O_f}\ket{\alpha}\otimes \left(f(\alpha)\ket{0}+\sqrt{1-f(\alpha)^2}\ket{1}\right)\,.$$
As in the previous case, we initialize the $\alpha$ register in the equal superposition state, or equivalently in a superposition on the regular grid. Then we apply the oracle and then the QFT on the frequency registers to finally retrieve the Fourier decomposition of $f$ as an amplitude-encoded state.

\section{Proof of the PAC learnability of the \ccb concept class}
\label{app:learnccb}
\textbf{Training:} We consider quantum circuits realized by using $r$ Trotter steps of the time evolution of the  following parametrized Hamiltonian 
\begin{equation}
    H(x,\alpha)= H'(x) + \sum_{1\leq s \leq d}\alpha_s P_s\,,
\end{equation}
where $P_s$ are Pauli strings.
Such circuits have a frequency space of $m=|\mathcal{L}|=(4r+1)^d$. We implement the circuit learning algorithm exactly as in \Cref{subsec:circlearn}. The difference is that now the labels are off by the Trotter error, that is
\begin{equation}
    \epsilon_y < \frac{t^2 A}{2 r}\,,
\end{equation}
where $A$ is the sum of the spectral norm of all pairs of commutators. Using \Cref{thm:lasso} we are guaranteed that the labelling error on a new data point will be smaller $\epsilon$ with probability $1-\delta$ if we choose the following learning parameters
\begin{align}
T > \frac{16 m^4 \sqrt{ 2 \log\left( \frac{2m}{\delta} \right)}}{\epsilon^2},~
\epsilon_b < \frac{0.2 \epsilon}{m},~
\epsilon_y < 0.5 \epsilon\,.
\end{align}
The condition on $\epsilon_y$ yields a requirement on the number of Trotter steps as $r > t^2 A /\epsilon$ which in turn yields the feature space dimension as $m>(4t^2 A /\epsilon+1)^d$. We therefore require the number of training data points to scale as 
\begin{equation}
    T \in \tilde{\Theta}\left(\frac{ \sqrt{ d }}{\epsilon^{4d+2}}\right)\subset \tilde{\Theta}(\poly(n,\epsilon^{-1},\delta^{-1}))\,,
\end{equation}
so this method is sample-efficient. Regarding computational complexity, the training state requires the execution of $K = mT/\epsilon_b^2$ poly-depth quantum circuits with the condition of $\epsilon_b = 0.2\epsilon/m$. Recalling that $d\in O(\log(n))$,
\begin{equation}
    K\in \tilde{\Theta}\left(\frac{ \sqrt{ d }}{\epsilon^{7d+4}}\right) \subset \tilde{\Theta}(\poly(n,\epsilon^{-1},\delta^{-1}))\,.
\end{equation}
Therefore, the overall training process is efficient on a quantum computer. 

\textbf{Inference:} 
The inference process takes place analogously to the one in \Cref{subsec:circlearn}. Given a new datapoint $x_{t'}$, one retrieves the Fourier coefficients $b_{t'}$ of the approximate Trotter function. This requires the execution of $m/\epsilon_b^2 \in \tilde{\Theta}(m^3/\epsilon^2)  \subset \tilde{\Theta}(\poly(n,\epsilon^{-1},\delta^{-1}))$ poly-depth circuits. Then using the weight vector $\hat{w}$ derived in the training phase, the model returns $y_{t'} = \hat{w} \cdot \hat{b}(x_{t'})$.

\section{PAC efficient Kernel-based algorithm}
\label{app:kernel}
In this section, we propose a kernel based approach and proves that it efficiently PAC learns the parametrized circuit concept class as in \Cref{def:parcircCC} if the spectrum has a polynomially large spectrum.

\subsection{Concept class and noisy data}
Consider a $U_{\text{hard}}$ that decides a BQP-complete language with the function $c_0$ defined as follows
\begin{equation}
    c_0(x) \coloneqq \bra{x} U_{\text{hard}}^{\dagger} Z_0 U_{\text{hard}} \ket{x}
\end{equation}
We define the concept class where gates parametrized by an unknown vector $\alpha \in \mathbb{R}^d$ are added to the circuit implementing $U_{\text{hard}}$, such that the concept have a finite Fourier decomposition, as follows
\begin{equation}
    c_{\alpha}(x) = \langle b(x)\mid w(\alpha)\rangle, w(\alpha) = [e^{i \alpha \cdot l}]_{l\in[-L,+L]^d}, b(x),w(\alpha) \in \mathbb{C}^{m}\,.
\end{equation}
We have access to a dataset $\{(x_t,y_t=c_\alpha(x_t)\}_{t\in[0,T]}$. We define the feature space representation of the data $B=[b(x_t)]\in \mathbb{C}^{T\times m}$, and the Gram matrix $K=[\langle b(x_t)\mid b(x_{t'})\rangle]_{t,t'} \in \mathbb{C}^{T\times T}$. We have access to an approximation of the Gram matrix with $\hat{K} = K + E$ with $E$ p.s.d. and each element is bounded by $\epsilon_k$, which depends on the number of shots we use to measure the overlap, see \Cref{fig:algokernel}.

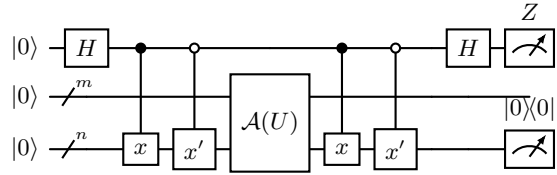
\begin{figure}[h]
    \centering
    \begin{quantikz}[row sep=0.1cm, column sep=0.2cm]
        \lstick{$\ket{0}$} & \gate{H}    & \ctrl{2} & \octrl{2} &&
        \ctrl{2} & \octrl{2} &
        &\gate{H}  & \meter{Z}\\
        \lstick{$\ket{0}$} & \qwbundle{m}&          &           &
        \gate[2]{\mathcal{A}(U)}&&&
        &          &          \\
        \lstick{$\ket{0}$} & \qwbundle{n}&\gate{x}  &\gate{x'}  &&
        \gate{x}  &\gate{x'}  
        &         && \meter{\ketbra{0}{0}}
    \end{quantikz}
    \caption{Evaluation of the kernel overlap, we have that $\langle Z\otimes I \otimes \ketbra{0}{0} \rangle= b(x) \cdot b(x')$}
    \label{fig:algokernel}
\end{figure}

\subsection{Getting the overlap}
We have a circuit such that 
\begin{equation}
    \mathcal{A}(U)\ket{0}\ket{x} = \ket{b(x)}\ket{x} + \ket{\cdots}\ket{x}^{\perp}
\end{equation}
The goal is to find a circuit such that the expectation of an observable yields $\langle b(x) | b(x') \rangle$. We present such a circuit in \Cref{fig:algokernel}, and prove below that it yields the desired outputs.
The application of $U$ and then of the last Hadamard gate yields the following states:
\begin{align}
    &\frac{1}{\sqrt{2}}(\ket{0}\ket{0}\ket{x} + \ket{1}\ket{0}\ket{x'}) \xrightarrow{I\otimes U} \\
    &\frac{1}{\sqrt{2}}(\ket{0}\ket{b(x)}\ket{0} + \ket{1}\ket{b(x')}\ket{0} + \cdots) \xrightarrow{H\otimes I} \\
    &\frac{1}{2}(\ket{0}(\ket{b(x)}+\ket{b(x')})\ket{0} + \ket{1}(\ket{b(x)}-\ket{b(x')})\ket{0} + \cdots)
\end{align}
Finally the expectation value of $Z\otimes I \otimes \ketbra{0}{0}$ (note it has unit spectral norm) is 
\begin{equation}
   \frac{1}{4}( \lvert \ket{b(x)}+\ket{b(x')} \rvert^2 - \lvert \ket{b(x)}-\ket{b(x')} \rvert^2 ) 
   = \text{Re}(\langle b(x) | b(x') \rangle)\,.
\end{equation}

\subsection{Problem setting: noisy linear model}
In this subsection and the next, we derive bounds on the true risk for a kernel approach to a linear model. 
For simplicity, we will briefly change notation, but $x = b(x)$, $ w= w(\alpha)$ and $y=c_{\alpha}(x)$.
Consider the linear model:
\begin{equation}
    y = x \cdot w
\end{equation}
We are given a $T$-sized noisy Gram matrix $\hat{K} \in \mathbb{R}^{T,T}$ with noisy labels $\hat{Y} \in \mathbb{R}^T$ such that
\begin{equation}
    Y=Xw
\end{equation}
Where $X$ and $Y$ are the underlying exact data and labels, following the linear model with a hidden $w$.
\begin{equation}
    \hat{Y}=Y+E_Y
\end{equation}
Where $E_Y$ is bounded, $\lVert E_Y\rVert_{\infty}\leq \epsilon_y$.
\begin{equation}
    \hat{K}=XX^{\dagger}+E_K
\end{equation}
Where $E_K$ is p.s.d. and each individual entry is bounded by $\epsilon_k$.
We apply kernel ridge regression to the above and aim to bound the true risk.

\subsection{True risk of the kernel ridge regression}
The minimizer of the kernel ridge regression has the following closed-form expression:
\begin{equation}
    a = (\hat{K}+\lambda I)^{-1} \hat{Y}
\end{equation}
Given a new $x'$, we define the true vector of kernel evaluation against training data-points as $F = [k(x_t,x')]_t$. We are only given access to noisy evaluations as $\hat{F} = F+E_F$, where $ \lVert E_F\rVert_{\infty} \leq\epsilon_k$. The goal is to predict the new label $y'$. The squared error in labelling of the new point is:
\begin{equation}
    L =\lvert x'\cdot w - k \cdot a \rvert^2\,.
\end{equation}
The true risk of the kernel ridge regression is the expectation value of this value $L$ over datasets, noise, and new data points. 

\subsection{Upper bound}

We write the squared error as:
\begin{equation}
L = \left|x'^\top w - \hat{F}^\top a\right|^2
\end{equation}

Define $A := (\hat{K} + \lambda I)^{-1}$. Substituting:
\begin{align}
L &= \left|x'^\top w - (Xx' + E_F)^\top A (Xw + E_Y) \right|^2 \\
  &= \left|x'^\top w - x'^\top X^\top A Xw - x'^\top X^\top A E_Y - E_F^\top A Xw - E_F^\top A E_Y \right|^2 \\
  &= \left|x'^\top (I - X^\top A X) w - x'^\top X^\top A E_Y - E_F^\top A Xw - E_F^\top A E_Y \right|^2
\end{align}

We bound each term separately.

\textbf{Bias term:} Let $R := I - X^\top A X$. Assume $\|w\|_2 \leq B$ and $\|x'\|_2 \leq 1$. Then
\begin{equation}
|x'^\top R w| \leq \|R\| \cdot \|x'\| \cdot \|w\| \leq B \cdot \|R\|
\end{equation}
We know from  under $\|E_K\| \leq \epsilon_k$, and that $\lVert k \rVert_{\infty} \leq \kappa$ we may bound:
\begin{equation}
\|R\| \leq \frac{T^2 \kappa}{\lambda^2} \epsilon_k \quad \Rightarrow \quad |x'^\top R w| \leq \frac{ B T^2 \epsilon_k \kappa}{\lambda^2}
\end{equation}

\textbf{Label noise term:}
\begin{equation}
|x'^\top X^\top A E_Y| \leq \|X x'\| \cdot \|A\| \cdot \|E_Y\| \leq \kappa \cdot \frac{1}{\lambda} \cdot \sqrt{T} \epsilon_y = \frac{\kappa \sqrt{T} \epsilon_y}{\lambda}
\end{equation}

\textbf{Kernel noise term:}
\begin{equation}
|E_F^\top A Xw| \leq \|E_F\| \cdot \|A Xw\| \leq \sqrt{T} \epsilon_k \cdot \frac{\|Xw\|}{\lambda} \leq \frac{\sqrt{T} \kappa B \epsilon_k}{\lambda}
\end{equation}

\textbf{Mixed noise term:}
\begin{equation}
|E_F^\top A E_Y| \leq \|E_F\| \cdot \|A E_Y\| \leq \sqrt{T} \epsilon_k \cdot \frac{\sqrt{T} \epsilon_y}{\lambda} = \frac{T \epsilon_k \epsilon_y}{\lambda}
\end{equation}

\textbf{Total Bound:}
\begin{equation}
E \leq \frac{B T^2 \epsilon_k \kappa}{\lambda^2} + \frac{\kappa \sqrt{T} \epsilon_y}{\lambda} + \frac{\sqrt{T} \kappa B \epsilon_k}{\lambda} + \frac{T \epsilon_k \epsilon_y}{\lambda}\,,
\end{equation}
where $E$ is the total error bound.
Then the true risk satisfies:
\begin{equation}
\mathbb{E}[L] \leq E^2
\end{equation}
And $\epsilon$ accuracy of the prediction is guaranteed choosing measurement precision $\epsilon_k$, $\epsilon_y$ and number of training points $T$ such that
\begin{equation}
    \left(\frac{B T^2 \epsilon_k \kappa}{\lambda^2} + \frac{\kappa \sqrt{T} \epsilon_y}{\lambda} + \frac{\sqrt{T} \kappa B \epsilon_k}{\lambda} + \frac{T \epsilon_k \epsilon_y}{\lambda}\right)^2 \leq \epsilon\,.
\end{equation}

\subsection{Conclusion}
The proposed kernel-based learning algorithm is able to PAC learn the concept \Cref{def:parcircCC} under some conditions. 
Because we have $\kappa=1$, when $B$ is at most polynomial, we can find parameters $\epsilon_k, \epsilon_y$ and $T$ that scale polynomially, guaranteeing the efficiency of the learning algorithm.
In particular, if the spectrum is sparse, that there exist a polynomially sized subset of indices $L'=\{l'\}$ such that any $x$ the $b_{l\notin L'}(x)=0$, then $B \in \poly(n)$.

\section{Flipped concept and connection to RFF}
The flipped concept, where the input and the index of the concept class are inverted as, is in contrast to the concept class $\mathcal{C}^U$ we study easy to learn.
We define it as follows,
$$ \bar{\mathcal{C}} \coloneqq \{c_x : \alpha\in\mathbb{R}^d \rightarrow \sum_l b_l(x)e^{il\cdot \alpha}\}_{x\in\{0,1\}^*}\,.$$
This corresponds to a scenario where a quantum circuit has some fixed, potentially unknown gates and some parametrized gates. Although this concept class looks similar to the one we studied earlier, for this one it is easy to see that unlike $\mathcal{C}^U_{n,\log{n}}$, the concepts here are in $\mathsf{P/poly}$, where the advice is the polynomially sized list of $b_l(x)$. 
Therefore, the arguments we make about the hardness of $\mathcal{C}_{n,\log{n}}$ do not apply. 

In fact, this concept is classically efficiently learnable by a simple Fourier analysis of the data.
This is typically the scenario of quantum neural networks and quantum kernels, which have been dequantized by techniques like Random Fourier Features \cite{sweke_potential_2023, thabet_when_2024,sahebi_dequantization_2025}. 
In fact, using the circuit proposed and sampling from the frequency register after post-selection on the circuit register, yields the optimal distribution to approximate the circuit using RFF, as it satisfies perfectly the alignment condition.

\end{document}